\newtheorem{lemma}{Lemma}
\newtheorem{theorem}{Theorem}
\newtheorem{definition}{Definition}
\title{Random Walk-based In-network Computation of Arbitrary Functions}
\author[1]{Iqra Altaf Gillani}
\author[2]{Pooja Vyavahare}
\author[1]{Amitabha Bagchi}
\affil[1]{Department of Computer Science and Engineering, IIT Delhi\\ \{iqraaltaf,bagchi\}@cse.iitd.ac.in}
\affil[2]{School of Computing and Electrical Engineering, IIT Mandi\\ pooja\_vyavahare@iitmandi.ac.in}
\newcommand{\nbd}[1]{\mathsf{Nbd}(#1)}
\newcommand{\degree}[1]{\mbox{deg}(#1)}
\newcommand{\prob}[1]{\mathbb{P}\left[ #1 \right]}
\newcommand{\comm}[1]{}
\newcommand{\tmix}{t_{\mbox{\scriptsize mix}}}
\newcommand{\tbarfk}{\bar{\tau}_{\mbox{\scriptsize f(K)}}}
\newcommand{\probm}[1]{\mathcal{P}[ #1 ]}
\newcommand{\thits}{t_{\mbox{\scriptsize hit}}}
\newcommand{\tcomp}{\tau_{\mbox{\scriptsize f(K)}}^\ell}
\newcommand{\tcompk}[1]{\tau_{\mbox{\scriptsize comp}}(#1)}
\newcommand{\tappk}{\tau_{\mbox{\scriptsize app}}^\ell}
\newcommand{\sink}{u_{\mbox{\scriptsize s}}}
\newcommand{\T}{\mathcal{T}}
\newcommand{\fixedalgo}{Fixed Random-Compute}
\newcommand{\flexiblealgo}{Flexible Random-Compute}
\newcommand{\ex}[1]{{\mbox{E}\left[#1\right]}}
\providecommand{\keywords}[1]{\textbf{\textit{Keywords:}}#1}
\date{}
\begin{document}
\maketitle

\begin{abstract}
We study in-network computation on general network topologies. Specifically, we are given the description of a function, and a network with distinct nodes at which the operands of the function are made available, and a designated sink where the computed value of the function is to be consumed. We want to compute the function during the process of moving the data towards the sink. Such settings have been studied in the literature, but mainly for symmetric functions, e.g. average, parity etc., which have the specific property that the output is invariant to permutation of the operands. To the best of our knowledge, we present the first fully decentralised algorithms for arbitrary functions, which we model as those functions whose computation schema is structured as a binary tree. We propose two algorithms, Fixed Random-Compute and Flexible Random-Compute, for this problem, both of which use simple random walks on the network as their basic primitive. Assuming a stochastic model for the generation of streams of data at each source, we provide a lower and an upper bound on the rate at which Fixed Random-Compute can compute the stream of associated function values. Note that the lower bound on rate though computed for our algorithm serves as a general lower bound for the function computation problem and to the best of our knowledge is first such lower bound for asymmetric functions. We also provide upper bounds on the average time taken to compute the function, characterising this time in terms of the fundamental parameters of the random walk on the network: the hitting time in the case of Fixed Random-Compute, and the mixing time in the case of Flexible Random-Compute.
 \end{abstract}
 
\keywords{ Function computation, random walks, coalescing random walks, stable rate}

 \section{Introduction}
\label{sec:intro}
Since most commercially available sensor nodes used in today's sensor networks are capable of performing small operations on the data, distributed function computation (also known as in-network computation) algorithms seek to exploit the computation capability of these nodes to increase the efficiency of communication and computation of function~\cite{Kannan-JSAC:2013,Khude-INFOCOM:2005,Kamath-TIT:2014} over plain data forwarding techniques~\cite{Kamra-SIGCOMM:2006,Intanagonwiwat-TN:2003}. The in-network computation paradigm is based on the following simple idea: instead of moving all the distinct data items generated at different nodes of the network to the sink and computing the function of interest at the sink, we leverage the meetings of the data items at intermediate nodes to compute partial functions which can then be combined at the sink. The idea is to reduce the load on the network and thereby increase the rate at which the data can be read.

 Clearly, the key question here is: is it always possible to combine any two data items that meet at a node? In the simplest scenario, the answer is yes, and the class of functions for which the answer is yes are known in the literature as ``symmetric functions'' and most of the existing literature on in-network function computation mainly deals with such functions (see, e.g.,\cite{Giridhar-JSAC:2005}).
In this paper, our aim is to study the distributed computation of arbitrary functions where the sequence of operations of the data is important to compute the final function value. Generalising from the class of symmetric functions requires us to be able to specify the kinds of functions we handle and typically functions are specified by a {\em computation schema} which describes the structure of allowable combinations of operands and partial functions that make up the final function being computed. In particular, we study a class of functions whose \emph{computation schema} can be represented by a binary tree. Another major contribution of our work is to distinguish between two modes of computing intermediate values (partial functions) that make up a larger function contribution, the distinction made on the basis of whether the node at which a specific intermediate value is computed is specified in advance (we call it the ``Fixed'' scenario) or not (the ``Flexible'' scenario.)

The distributed computation of functions with computation schema modelled by a directed graph has recently attracted some attention in the literature \cite{Shah-JSAC:2013,Liu-SIGMETRICS:2013,Vyavahare-arXiv:2015}, but these works typically take a centralised approach to the problem, centrally computing routings that realise the function within the network. Our approach, based on random walks, is fundamentally decentralised. Such a communication strategy is very useful when the network is changing constantly and the routing information can become invalid frequently. Any node of the network in our algorithm does not need global information (like the number of nodes or topology of the network) and the communication depends only on the knowledge of the neighbourhood. In our algorithms, multiple data packets move across the network leading to multiple random walks in the network. In order to compute the function in the network, our algorithms combine these data packets in the order defined by the computation schema of the function. This combination of packets leads to \emph{coalescence} of random walks which have been studied in \cite{Cooper-SIAM:2013,Kanade-arXiv:2017}.

\paragraph*{Our contributions} 
\begin{enumerate}
\item We describe two in-network computation scenarios: (a) the {\em fixed} scenario under which each internal node of the function schema is mapped to a specific network node that is tasked with computing the subfunction corresponding to that internal node of the schema and (b) the {\em flexible} scenario in which a subfunction computation takes place opportunistically at any network node that happens to have the two relevant operands.
\item We propose simple, decentralised random walk-based
algorithms, \fixedalgo\ and \flexiblealgo, to compute a function with a binary tree
computation schema in the fixed and flexible network scenarios respectively. 
\item We present upper and lower bounds on the rate of computation for \fixedalgo\ in a setting where data is generated stochastically in rounds. In particular, the lower bound presented is a general lower bound for the function computation problem and to the best of our knowledge is the first of its kind for asymmetric functions. We also find the average function computation time taken by these algorithms.
\end{enumerate}

\paragraph*{Related work}

As discussed earlier, most of the extant literature study the in-network computation of symmetric functions (see \cite{Giridhar-JSAC:2005,Mosk-PODC:2006,Dutta-FOCS:2008,Iyer-TMC:2011}) where the sequence of computation does not matter. Researchers have studied the computation time as well as the rate of computation of various classes of symmetric functions for different network models. In particular, \cite{Giridhar-JSAC:2005} study the rate of computation of classes of symmetric functions namely \emph{type-thereshold} (example: maximum) and \emph{type-sensitive} (example: average) for collocated and random planar multihop networks. Following the work of \cite{Giridhar-JSAC:2005}, various authors studied these functions in various network settings; for example \cite{Dutta-FOCS:2008} studied for noisy communication model, \cite{Karamchandani-TIT:2011} for grid networks and \cite{Khude-INFOCOM:2005} for random geometric graphs. Further, Banerjee et. al. \cite{Banerjee-QS:2012} study the rate of computation of \emph{divisible} functions which can be computed by a divide-and-conquer method on any subset of the source data. Another variant studied in the literature is of \emph{$\lambda_f$-divisible} functions which can be computed by performing operations on at most $\lambda_f$ source data values at any time~\cite{Kannan-JSAC:2013}. Note that all the work discussed above only present upper bounds on the rate. However, Kamath et al. \cite{Kamath-TIT:2014} find a lower bound for computing a symmetric function \texttt{MAX} in a pipelined setting different than ours, which requires knowledge about network structure. To the best of our knowledge no lower bound result has been presented for asymmetric functions so far.

In this work, we not only extend the function computation study to the class of arbitrary functions but also present a general lower bound on the rate of function computation which is a first lower bound presented for this class of problem. In particular, we present random walk-based algorithms to compute arbitrary functions with binary tree schema. The communication strategy in our setting is similar to that of \emph{gossip algorithms} \cite{Boyd-INFOCOM:2005,Mosk-PODC:2006,Shah-ICASSP:2009} where each node sends a data packet to a randomly chosen neighbour in every time slot. In this setting, the time to compute the average of data values is studied by \cite{Boyd-INFOCOM:2005} and separable functions of data are studied in \cite{Mosk-PODC:2006}. Separable functions are a class of functions which can be written as the sum of functions on the subsets of source data values. Mosk-Aoyama and Shah  \cite{Mosk-PODC:2006} give an upper bound on the computation of such functions which is directly proportional to the logarithm of the size of the network and inversely proportional to the conductance of the network. The analysis and thereby the results of gossip algorithms depends on the random walk properties which in turn depends on the transition matrix of the random walk. Our results also depend on the random walk parameters like the spectral gap of transition matrix, hitting time and mixing time.

Moreover, in our setting movement of different data operands in the network essentially leads to multiple random walks on the network. These random walks combine with each other based on the sequence defined by the computation schema resulting in the coalescence of random walks. Properties of multiple random walks have been studied in the literature \cite{Alon-CPC:2011,Cooper-ICALP:2009,Efremenko-APPROX-RANDOM:2009,Patel-SIAMX:2016}. Interaction of multiple random walks has been studied by \cite{Cooper-ICALP:2009} where two random walks either annihilate each other or combine when they meet and a bound on the cover time of such walks has been presented. Coalescing random walks (two random walks combine when they meet) has been studied by \cite{Cooper-SIAM:2013,Kanade-arXiv:2017}. An upper bound on the time to coalesce $n$ random walks each starting from different vertices in the graph is given by \cite{Cooper-SIAM:2013}. Recently \cite{Kanade-arXiv:2017} improved this bound on coalescence time for various kinds of graphs. Our analysis of algorithms also involves multiple and coalescing random walks.

\paragraph*{\textbf{Paper organization}} In Section~\ref{sec:prelim}, we describe our in-network computation model, the data generation model (Section~\ref{subsec:ntw_comp_model}), the class of functions we deal with (Section~\ref{subsec:func_schem}) and the two in-network computation scenarios (Section~\ref{subsec:schemes}) that we propose: the fixed and the flexible. In Section~\ref{sec:algos}, we first define the routing scheme used by our algorithms and then we discuss our two random walk-based algorithms for the two scenarios in Section~\ref{subsec:fixed_algo} and Section~\ref{subsec:flexible_algo} respectively. In Section~\ref{sec:metrics_results}, we define our performance metrics and discuss our main results with some examples. In Section~\ref{sec:proofs}, we present the proofs of our theorems. In particular, we discuss the proof of the rate analysis of \fixedalgo\ algorithm in Section~\ref{subsec:rate} and in Section~\ref{subsec:computation_time} we analyse the average time taken by our algorithms in computing an asymmetric function in both fixed and flexible models. We conclude in Section~\ref{sec:conclusion} and provide some directions for future work.

\section{Modelling Assumptions and Computation Scenarios}
\label{sec:prelim}
In this section, we first describe our network, data generation and computation model. Then, we define the particular function schema we work with and discuss the two in-network computing scenarios, the ``Fixed'' and the ``Flexible'' where such functions can be computed.

\subsection{Network and Computation Model}
\label{subsec:ntw_comp_model}
\paragraph{The network model}
The communication network is denoted by an undirected connected graph
$G=(V, E),$ where $V$ is the set of $n$ nodes, $V_s \subseteq V$ with $|V_s|=K$ is the set of source nodes and $E$ is the set of
$m$ edges. An edge $e = (u, v)$ is present between nodes $u,v \in V$
if they can communicate with each other and we denote $\nbd{u}:= \{ v
\in V |  (u, v)\in E \}$. The nodes in the network follow a slotted
time model for communication and we assume that each node can send at
most one data packet to a single neighbour in any time slot. This is
known as the {\em transmitted gossip constraint} in the gossip
literature \cite{Mosk-PODC:2006}. Note that under
this network model a node may receive multiple packets in one time slot.

\paragraph{The model of computation in the network}
We need to compute a function $f_K$ defined as $f_K := f_K(x_1,x_2, \ldots,x_K)$ in the given network. The operand $x_i$ is generated by a \emph{data generation model} at source node $u_i\in V_s$ and we have the source set $V_s=\{u_1, u_s,\ldots, u_K\}$. We need to compute $f_K(x_1,x_2, \ldots,x_K)$ and make it available at a designated node of the network called the sink $\sink\in V$. We work within the paradigm of in-network computation \cite{Banerjee-QS:2012} for computing $f_K$, i.e., nodes can compute intermediate functions called subfunctions of the data while relaying it in the network. We elaborate this further when we discuss the two specific in-network computation scenarios we work with.

\paragraph{The data generation model}
\label{subsec:dat_genration_sink}
We consider the data generation process at each source node
$u_i \in V_s$ as a stochastic arrival process in discrete time that is
Bernoulli with parameter $\beta$ and independent of the arrivals
taking place at all other nodes, i.e. at each time slot $t$ each node
generates a new data packet with probability $\beta$ independent of
all other nodes. We partition the data generated into rounds: the $\ell^{th}$ data
packet to be generated at source node $u_i \in V_s$ is said to be part
of round $\ell$. If we denote this data item as $x_i(l)$ then note that
the function computed in round $\ell$ is $f_K(x_1(\ell), x_2(\ell),
\ldots, x_K(\ell))$. So, each data packet is identified by an identifier $\ell$ which denotes their associated round. We will refer to this model of data generation as
      {\em independent Bernoulli data generation with parameter
        $\beta$}. We will use this data generation model for analysing our algorithms and then in Section~\ref{sec:conclusion} we will also discuss two other data generation models and present our latency results in their context.
        
\subsection{The Binary Tree Function Schema} 
\label{subsec:func_schem}
After defining our network model, model of computation and data generation model now we will define the type of functions we work with i.e. particular class of asymmetric or arbitrary functions. Recall, a function is symmetric if the output value does not depend on the permutation of input values and sequence of intermediate operations does not matter. To generalise from the notion of symmetric to asymmetric we need some language to describe function classes that allow only specific argument combinations. We use the abstraction of a directed acyclic graph to capture the partial combinations that are possible in a given function. We call this graph to be the \emph{computation schema} of the function. In this work, we study a class of functions whose computation schema is a directed tree. An example is shown in Figure~\ref{fig:func_schema}. 

Our algorithms and results are presented for a binary tree computation schema, but this does not, in fact, lead to any great loss in generality when it comes to the class of asymmetric functions. In the function computation schema, any intermediate node representing unary operation can be merged with its parent node as it requires only one operand. The unary operation can be performed by the network node which performs the operation for its parent node. Thus a function computation schema can have only binary and $M$-ary operations. Our results of binary tree schema can easily be extended to arbitrary directed trees (like $M$-ary trees) and we discuss these extensions in Section~\ref{sec:conclusion}. With this in mind, we now formalise the notion of binary tree computation schema and present the notation we will use in the subsequent sections.

Let $f_K := f_K(x_1,x_2, \ldots,x_K)$ be a function whose schema is described by a binary tree $\T$ with $L\leq K$ leaves as sources and height $h$ where $\log K \leq h <K$. For complete binary tree $L=K$, where $K=2^r$ for some value $r$ and $h=\log K$. So, let the root of the tree where finally $f_K$ is computed be at level 0 and have id $\T(0,0)$; note it is the only node at this level. In general, level $i$ has at most $2^i$ nodes with id set $\{\T(i,0), \T(i,1),\ldots,\T(i, 2^i-1)\}$. Let $\theta(i,j)$ be the value computed at node with id $\T(i,j)$. For level $i=h$ i.e. the leaf level, each node performs an identity function on data so, $\theta(i,j)=x_{j+1}$ for $0 \leq j < 2^i$ and in general for level $0 \leq i < h$ and $0 \leq j < 2^i$, node with id $\T(i,j)$ computes the function $\theta(i,j)=\theta(i+1, 2j) \oplus \theta(i+1, 2j+1)$, where $\oplus$ is the binary operation to be performed by node $\T(i,j)$ specified by the function schema. Note that in case of a non-complete binary tree such node $\T(i,j)$ can compute identity function as well if it corresponds to a source node. Also, we have $f_K=\theta(0,0)$. 

Consider an example schema for function $f_4=x_1x_2+x_3x_4$ as shown in the  Figure~\ref{fig:func_schema}a. Note that it is a complete binary tree with height $h=2$ and the source nodes are present at the leaf level. The data packets $x_1,x_2$ are called the operands for the subfunction $x_1x_2$ and are obtained by the identity functions on the respective nodes i.e. $\theta(2,0)=x_1$ and $\theta(2,1)=x_2$. Also, the nodes labelled $\T(1,0),\T(1,1),\T(0,0)$ represent the specific operations that need to be performed on the data. For example, node $\T(1,0)$ performs the multiplication operation and represents the subfunction $\theta(1,0)=\theta(2,0)\times\theta(2,1)=x_1x_2$ for the function $\theta(0,0)=f_4=x_1x_2+x_3x_4$. Next, consider a non-complete binary tree schema for function $f_4=(y_1y_2+y_3)y_4$ as shown in the  Figure~\ref{fig:func_schema}b with height $h=3$. In this case, apart from the leaf level source nodes are present at other levels as well and they perform identity function like $\theta(2,1)=y_3$ and $\theta(1,1)=y_4$. All other nodes perform specific operation  on data like node $\T(0,0)$ combines the subfunctions $(y_1y_2+y_3)$ and $y_4$ to compute function $\theta(0,0)=f_4=(y_1y_2+y_3)y_4$.
\begin{figure}[t]
    \centering
        \resizebox*{0.7\linewidth}{!}{
    \subfloat[]{
\begin{tikzpicture}[scale=1.05]
 \scriptsize
 \tikzstyle{every node} = [circle,draw=black]
 \node (a) at (-2,3) [fill=blue!20] {};
 \node (b) at (-1,3) [fill=blue!20] {};
 \node (c) at (1,3) [fill=blue!20] {};
 \node (d) at (2,3) [fill=blue!20] {};
 \node (e) at (-1.5,1.5)[font=\tiny]  {$\times$};
 \node (f) at (1.5,1.5) [font=\tiny]{$\times$};
 \node (g) at (0,0)[font=\tiny] {$+$};
 
 \node at (a) [draw=none,left]{$\T(2,0)$};
 \node at (a) [draw=none,above]{$x_1$};
 
 \node at (d) [draw=none,right]{$\T(2,3)$};
 \node at (d) [draw=none,above]{$x_4$};
 
 \node at (b) [draw=none,right]{$\T(2,1)$};
 \node at (b) [draw=none,above]{$x_2$};
 
 \node at (c) [draw=none,left] {$\T(2,2)$};
 \node at (c) [draw=none,above]{$x_3$};

 \node at (-1.7,1.5)[draw=none,left] {$\T(1,0)$};
 \node at (1.7,1.5) [draw=none,right] {$\T(1,1)$};
 \node at (-0.3,0) [draw=none,left] {$\T(0,0)$};
 \draw [->,color=red] (a) -- (e) ;
 \draw [->,color=blue] (b) -- (e) ;
 \draw [->,color=violet] (c) -- (f) ;
 \draw [->,color=green] (d) -- (f) ;
 \draw [->,color=orange] (e) -- (g) ;
 \draw [->,color=magenta] (f) -- (g) ;
 \draw [->,color=green!50!red] (g) -- (0,-1);
 
 \end{tikzpicture}
}
\subfloat[]{
\begin{tikzpicture}[scale=1.05]
 \scriptsize
 \tikzstyle{every node} = [circle,draw=black]
 \node (a) at (-2,3) [fill=blue!20] {};
 \node (b) at (-1,3) [fill=blue!20] {};
  \node (c) at (-1.5,1.5)[font=\tiny]  {$\times$};
 \node (d) at (1.5,1.5) [fill=blue!20] {};
 \node (e) at (0,0)[font=\tiny] {$+$};
 \node (f) at (3,0)[fill=blue!20] {};
\node (g) at (1.5,-1.5)[font=\tiny] {$\times$};

 \node at (a) [draw=none,left]{$\T(3,0)$};
 \node at (a) [draw=none,above]{$y_1$};
  \node at (b) [draw=none,right]{$\T(3,1)$};
 \node at (b) [draw=none,above]{$y_2$};
  \node at (-1.7,1.5)[draw=none,left] {$\T(2,0)$};
 \node at (1.7,1.5) [draw=none,right] {$\T(2,1)$};
  \node at (-0.3,0) [draw=none,left] {$\T(1,0)$};
    \node at (d) [draw=none,above]{$y_3$};
  \node at (1.75,-1.5) [draw=none,right]{$\T(0,0)$};
 \node at (f) [draw=none,right] {$\T(1,1)$};
 \node at (f) [draw=none,above]{$y_4$};

 \draw [->,color=red] (a) -- (c) ;
 \draw [->,color=blue] (b) -- (c) ;
  \draw [->,color=orange] (c) -- (e) ;
 \draw [->,color=magenta] (d) -- (e) ;
 \draw [->,color=violet] (e) -- (g) ;
 \draw [->,color=green] (f) -- (g) ;
 \draw [->,color=green!50!red] (g) -- (1.5,-2.5);
 
 \end{tikzpicture}
}
}
    \caption{Computation schema for function (a) $f_4 = x_1x_2 + x_3x_4$ (b) $f_4=(y_1y_2+y_3)y_4$.}
    \label{fig:func_schema}
\end{figure}
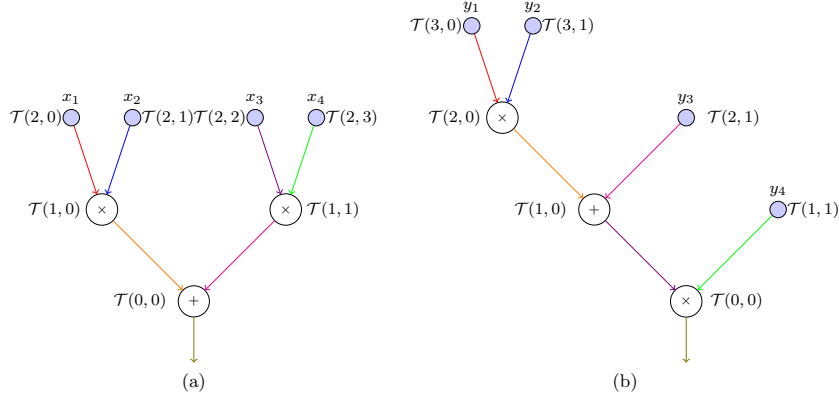

\subsection{In-network Computation Scenarios}
\label{subsec:schemes}
Now, for the given function schema i.e. the binary tree schema let us look at the two computation scenarios by which we can compute such functions. These two scenarios differ based on whether the node at which a specific intermediate value is to be computed is known in advance or not.

\paragraph{In-network computation scenario 1: The Fixed model}
In this model, any subfunction of computation schema $\T$ can be performed only at a specific node in the network i.e. we are given a mapping $\phi: \T \mapsto V.$ We assume $\phi$ is a one-to-one mapping. The schema node id $\phi^{-1}(u)$ is hard-wired into $u \in V$ at the time of deployment of the network so that $u$ knows the specific operation which it has to perform along with the data identifiers of the operands that are the arguments of that operation. Every node $u$ that is in the codomain of $\phi$ maintains two queues; one for storing the data operands of the operation specified by $\phi^{-1}(u)$, we call it  $\mathcal{C}_t(u)$, and another for data
transmission, namely $Q_t(u)$. Once packets of both data operands are received in $\mathcal{C}_t(u),$ node $u$ performs the operation defined by $\phi^{-1}(u)$ and stores the generated data packet in $Q_t(u).$ This increases the data transmission queue size by one. Packets other than these operands are directly stored in $Q_t(u)$ for future transmissions. 

Consider the schema of Figure~\ref{fig:func_schema}a and the network of Figure~\ref{fig:example_fixed}a with a fixed mapping of nodes in schema to the nodes of the network. Note, $V_s=\{x_1,x_2,x_3,x_4\}$ is the source set for function $f_4$ which is mapped to schema nodes $\T(2,0),\T(2,1),\T(2,2),\T(2,3)$ respectively and nodes $\T(1,0),\T(1,1),\T(0,0)$ of schema are mapped to nodes $t,v,w$ of the network respectively. So, network nodes $t,v,w$ know the ids of their operands and the function they need to compute and maintain both data operand queue and data transmission queue. On the other hand, network node $u$ (Figure~\ref{fig:example_fixed}b) is not mapped to any node in schema so it maintains only data transmission queue and relays data $x_1,x_2$ rather than performing the operation specified by $\T(1,0)$ on them (see Figure~\ref{fig:example_fixed}c). The operation of $\T(1,0)$ is performed by node $t$ at the end of $5^{th}$ time slot (Figure~\ref{fig:example_fixed}d). Note that the network node $w$ stores the operand $x_3x_4$ in its data operand queue till it receives the other operand to perform the operation specified by $\T(0,0)$ in Figures~\ref{fig:example_fixed}d, \ref{fig:example_fixed}e.

\begin{figure}[t]
    \centering
    \resizebox*{1\linewidth}{!}{
\subfloat[]{
\begin{tikzpicture}[>=latex]
  \scriptsize
  \tikzstyle{every node} = [circle,draw=black]
  \foreach \x in {0,...,3}
  \foreach \y in {0,...,3} 
  	\node  (\x\y) at (1.5*\x,1.5*\y) {};
  
  \foreach \x in {0,...,3}
  \foreach \y [count=\yi] in {0,...,2}  
  \draw (\x\y)--(\x\yi) (\y\x)--(\yi\x) ;
  
  \draw (13) -- (02);
  \draw (22) -- (11);
  \draw (31) -- (20);
  \draw (11) -- (00);
  
  \node at (03) [draw=none,above] {$x_1$};
  \node at (03) [fill=blue!20] {};
  \node at (03) [draw=none,below right] {$\T(2,0)$};
  
  \node at (23) [draw=none,above] {$x_2$};
  \node at (23) [fill=blue!20] {};
  \node at (23) [draw=none,below left] {$\T(2,1)$};
  
  \node at (10) [draw=none,below] {$x_3$};
  \node at (10) [fill=blue!20] {};
  \node at (10) [draw=none,above right] {$\T(2,2)$};
  
  \node at (32) [draw=none,right] {$x_4$};
  \node at (32) [fill=blue!20] {};
  \node at (32) [draw=none,below left] {$\T(2,3)$};
  
  \node at (30) [draw=none,below] {$\sink$};
  \node at (30) [fill=red!20] {};
  
  \node at (22) [draw=none,above right] {$\T(1,0)$};
  \node at (22) [fill=yellow!60] {};
  \node at (22) [draw=none,below right] {$t$};
  
  \node at (11) [draw=none,above left] {$\T(1,1)$};
  \node at (11) [fill=yellow!60] {};
  \node at (11) [draw=none,below right] {$v$};
  
  \node at (31) [draw=none,above right] {$\T(0,0)$};
  \node at (31) [fill=yellow!60] {};
  \node at (31) [draw=none,below right] {$w$};

    \end{tikzpicture}
} 
\subfloat[]{
	\begin{tikzpicture}[>=latex]
	\scriptsize
	\tikzstyle{every node} = [circle,draw=black]
	\foreach \x in {0,...,3}
	\foreach \y in {0,...,3} 
	\node  (\x\y) at (1.5*\x,1.5*\y) {};
	
	\foreach \x in {0,...,3}
	\foreach \y [count=\yi] in {0,...,2}  
	\draw (\x\y)--(\x\yi) (\y\x)--(\yi\x) ;
	
	\draw (13) -- (02);
	\draw (22) -- (11);
	\draw (31) -- (20);
	\draw (11) -- (00);
	
	\node at (12) [draw=none,below right] {$u$};
	
	\node at (03) [draw=none,above] {$x_1$};
	\node at (03) [fill=blue!20] {};
	\node at (03) [draw=none,below right] {$\T(2,0)$};
    
	\node at (23) [draw=none,above] {$x_2$};
	\node at (23) [fill=blue!20] {};
	\node at (23) [draw=none,below left] {$\T(2,1)$};
    
	\node at (10) [draw=none,below] {$x_3$};
	\node at (10) [fill=blue!20] {};
	\node at (10) [draw=none,above right] {$\T(2,2)$};
    
	\node at (32) [draw=none,right] {$x_4$};
	\node at (32) [fill=blue!20] {};
	\node at (32) [draw=none,below left] {$\T(2,3)$};
    
	\node at (30) [draw=none,below] {$\sink$};
	\node at (30) [fill=red!20] {};
	
	\node at (22) [draw=none,above right] {$\T(1,0)$};
	\node at (22) [fill=yellow!60] {};
	\node at (22) [draw=none,below right] {$t$};
	
	\node at (11) [draw=none,above left] {$\T(1,1)$};
	\node at (11) [fill=yellow!60] {};
	\node at (11) [draw=none,below right] {$v$};
	
	\node at (31) [draw=none,above right] {$\T(0,0)$};
	\node at (31) [fill=yellow!60] {};
	\node at (31) [draw=none,below right] {$w$};
	
	\draw [->,very thick,color=red] (03) -- (02);
	\draw [->,very thick,color=red] (02) -- (12);
	\draw [->,very thick,color=blue] (23) -- (13);
	\draw [->,very thick,color=blue] (13) -- (12);
	
	\draw [->,very thick,color=green] (32) -- (22);
	\draw [->,very thick,color=green] (22) -- (11);
	\draw [->,very thick,color=violet] (10) -- (00);
	\draw [->,very thick,color=violet] (00) -- (11);

	\end{tikzpicture}
}
\subfloat[]{
	\begin{tikzpicture}[>=latex]
	\scriptsize
	\tikzstyle{every node} = [circle,draw=black]
	\foreach \x in {0,...,3}
	\foreach \y in {0,...,3} 
	\node  (\x\y) at (1.5*\x,1.5*\y) {};
	
	\foreach \x in {0,...,3}
	\foreach \y [count=\yi] in {0,...,2}  
	\draw (\x\y)--(\x\yi) (\y\x)--(\yi\x) ;
	
	\draw (13) -- (02);
	\draw (22) -- (11);
	\draw (31) -- (20);
	\draw (11) -- (00);
	
	\node at (03) [draw=none,above] {$x_1$};
	\node at (03) [fill=blue!20] {};
	\node at (03) [draw=none,below right] {$\T(2,0)$};
    
	\node at (23) [draw=none,above] {$x_2$};
	\node at (23) [fill=blue!20] {};
	\node at (23) [draw=none,below left] {$\T(2,1)$};
    
	\node at (10) [draw=none,below] {$x_3$};
	\node at (10) [fill=blue!20] {};
	\node at (10) [draw=none,above right] {$\T(2,2)$};
    
	\node at (32) [draw=none,right] {$x_4$};
	\node at (32) [fill=blue!20] {};
	\node at (32) [draw=none,below left] {$\T(2,3)$};
    	
	\node at (30) [draw=none,below] {$\sink$};
	\node at (30) [fill=red!20] {};
	
	\node at (12) [draw=none,below right] {$u$};
	
	\node at (22) [draw=none,above right] {$\T(1,0)$};
	\node at (22) [fill=yellow!60] {};
	\node at (22) [draw=none,below right] {$t$};
	
	\node at (11) [draw=none,above left] {$\T(1,1)$};
	\node at (11) [fill=yellow!60] {};
	\node at (11) [draw=none,below right] {$v$};
	
	\node at (31) [draw=none,above right] {$\T(0,0)$};
	\node at (31) [fill=yellow!60] {};
	\node at (31) [draw=none,below right] {$w$};
	
	\draw [->,very thick,color=red] (12) -- (22);
	\draw [->,very thick,color=magenta] (11) -- (10);
	\end{tikzpicture}
}
\subfloat[]{
	\begin{tikzpicture}[>=latex]
	\scriptsize
	\tikzstyle{every node} = [circle,draw=black]
	\foreach \x in {0,...,3}
	\foreach \y in {0,...,3} 
	\node  (\x\y) at (1.5*\x,1.5*\y) {};
	
	\foreach \x in {0,...,3}
	\foreach \y [count=\yi] in {0,...,2}  
	\draw (\x\y)--(\x\yi) (\y\x)--(\yi\x) ;
	
	\draw (13) -- (02);
	\draw (22) -- (11);
	\draw (31) -- (20);
	\draw (11) -- (00);
	
	\node at (03) [draw=none,above] {$x_1$};
	\node at (03) [fill=blue!20] {};
	\node at (03) [draw=none,below right] {$\T(2,0)$};
    
	\node at (23) [draw=none,above] {$x_2$};
	\node at (23) [fill=blue!20] {};
	\node at (23) [draw=none,below left] {$\T(2,1)$};
    
	\node at (10) [draw=none,below] {$x_3$};
	\node at (10) [fill=blue!20] {};
	\node at (10) [draw=none,above right] {$\T(2,2)$};
    
	\node at (32) [draw=none,right] {$x_4$};
	\node at (32) [fill=blue!20] {};
	\node at (32) [draw=none,below left] {$\T(2,3)$};
	
	\node at (30) [draw=none,below] {$\sink$};
	\node at (30) [fill=red!20] {};
	
	\node at (12) [draw=none,below right] {$u$};
	
	\node at (22) [draw=none,above right] {$\T(1,0)$};
	\node at (22) [fill=yellow!60] {};
	\node at (22) [draw=none,below right] {$t$};
	
	\node at (11) [draw=none,above left] {$\T(1,1)$};
	\node at (11) [fill=yellow!60] {};
	\node at (11) [draw=none,below right] {$v$};
	
	\node at (31) [draw=none,above right] {$\T(0,0)$};
	\node at (31) [fill=yellow!60] {};
	\node at (31) [draw=none,below right] {$w$};
	
	\draw [->,very thick,color=blue] (12) -- (11);
	\draw [->,very thick,color=blue] (11) -- (22);
	
	\draw [->,very thick,color=magenta] (10) -- (20);
	\draw [->,very thick,color=magenta] (20) -- (31);
	\end{tikzpicture}
}
\subfloat[]{
	\begin{tikzpicture}[>=latex]
	\scriptsize
	\tikzstyle{every node} = [circle,draw=black]
	\foreach \x in {0,...,3}
	\foreach \y in {0,...,3} 
	\node  (\x\y) at (1.5*\x,1.5*\y) {};
	
	\foreach \x in {0,...,3}
	\foreach \y [count=\yi] in {0,...,2}  
	\draw (\x\y)--(\x\yi) (\y\x)--(\yi\x) ;
	
	\draw (13) -- (02);
	\draw (22) -- (11);
	\draw (31) -- (20);
	\draw (11) -- (00);
	
	\node at (03) [draw=none,above] {$x_1$};
	\node at (03) [fill=blue!20] {};
	\node at (03) [draw=none,below right] {$\T(2,0)$};
    
	\node at (23) [draw=none,above] {$x_2$};
	\node at (23) [fill=blue!20] {};
	\node at (23) [draw=none,below left] {$\T(2,1)$};
    
	\node at (10) [draw=none,below] {$x_3$};
	\node at (10) [fill=blue!20] {};
	\node at (10) [draw=none,above right] {$\T(2,2)$};
    
	\node at (32) [draw=none,right] {$x_4$};
	\node at (32) [fill=blue!20] {};
	\node at (32) [draw=none,below left] {$\T(2,3)$};
	
	\node at (30) [draw=none,below] {$\sink$};
	\node at (30) [fill=red!20] {};
	
	\node at (12) [draw=none,below right] {$u$};
	
	\node at (22) [draw=none,above right] {$\T(1,0)$};
	\node at (22) [fill=yellow!60] {};
	\node at (22) [draw=none,below right] {$t$};
	
	\node at (11) [draw=none,above left] {$\T(1,1)$};
	\node at (11) [fill=yellow!60] {};
	\node at (11) [draw=none,below right] {$v$};
	
	\node at (31) [draw=none,above right] {$\T(0,0)$};
	\node at (31) [fill=yellow!60] {};
	\node at (31) [draw=none,below right] {$w$};
	
	\draw [->,very thick,color=orange] (22)--(21);
	\draw [->,very thick,color=orange] (21) -- (31);
	\draw [->,very thick,color=green!50!red] (31) -- (30);
	\end{tikzpicture}
}
    }
    \caption{Example of \fixedalgo\ algorithm for the function schema of Figure~\ref{fig:func_schema}a: (a) Communication network $G$ with $\phi$ mapping (b) Movement of packets for first two time slots. Node $u$ does not perform any operation (c) Movement of packets in $3^{rd}$ time slot (d)  in $4^{th}$ and $5^{th}$ slot (e) in last three slots.}
    \label{fig:example_fixed}
\end{figure}
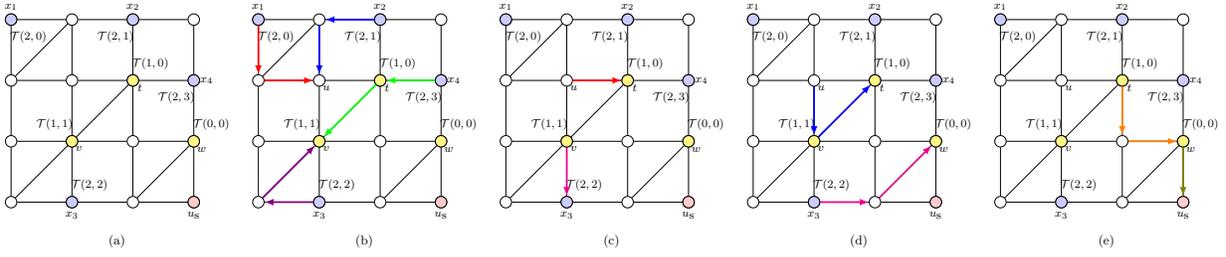

\paragraph{In-network computation scenario 2: The Flexible model} 
In the flexible network model, every node in the network knows the entire schema $\T$ and can perform any operation within it. If the operand data packets for any subfunction of $\T$ are available at any network node $u \in V$ at any time $t$ then it performs the required operation and creates the data packet for the corresponding subfunction. Every node $u$ at time $t$ in the network maintains a single queue $Q_t(u)$ of data packets which it has received (or generated) so far, and has not transmitted (or used for generating any subfunction) yet. For a received packet, if the corresponding operand is available in the queue then they are combined i.e. we perform \emph{coalescence} of packets and the new packet is stored in the queue for future transmission. On the other hand, if the corresponding operand for the received packet is not present in the queue then the received packet is simply stored in the queue without any coalescence.

Consider the schema $\mathcal{T}$ of Figure~\ref{fig:func_schema} and network $G$ of Figure~\ref{fig:example}a. In this model, any node can compute the subfunctions of schema provided it has both the data operands required for computing the subfunction. See Figure~\ref{fig:example}b, where network node $u$ receives data packets $x_1,x_2$ at the end of two time slots and performs the operation of node $\T(1,0)$ generating $x_1x_2.$ Similarly, node $v$ of the network performs the operation of schema node $\T(1,1)$ and $w$ performs the operation of schema node $\T(0,0)$ in the network. All other nodes of the network relay the data packets.

\begin{figure}[t]
    \centering
    \resizebox*{0.8\linewidth}{!}{
\subfloat[]{
\begin{tikzpicture}[>=latex]
  \scriptsize
  \tikzstyle{every node} = [circle,draw=black]
  \foreach \x in {0,...,3}
  \foreach \y in {0,...,3} 
  	\node  (\x\y) at (1.5*\x,1.5*\y) {};
  
  \foreach \x in {0,...,3}
  \foreach \y [count=\yi] in {0,...,2}  
  \draw (\x\y)--(\x\yi) (\y\x)--(\yi\x) ;
  
  \draw (13) -- (02);
  \draw (22) -- (11);
  \draw (31) -- (20);
  \draw (11) -- (00);
  
  \node at (03) [draw=none,above] {$x_1$};
  \node at (03) [fill=blue!20] {};
  \node at (03) [draw=none,below left] {$\T(2,0)$};
  
  \node at (23) [draw=none,above] {$x_2$};
  \node at (23) [fill=blue!20] {};
  \node at (23) [draw=none,below left] {$\T(2,1)$};
  
  \node at (10) [draw=none,below] {$x_3$};
  \node at (10) [fill=blue!20] {};
  \node at (10) [draw=none,above left] {$\T(2,2)$};
  
  \node at (32) [draw=none,right] {$x_4$};
  \node at (32) [fill=blue!20] {};
  \node at (32) [draw=none,below left] {$\T(2,3)$};
  
  \node at (30) [draw=none,below] {$\sink$};
  \node at (30) [fill=red!20] {};
  
    \end{tikzpicture}
} 
\subfloat[]{
	\begin{tikzpicture}[>=latex]
	\scriptsize
	\tikzstyle{every node} = [circle,draw=black]
	\foreach \x in {0,...,3}
	\foreach \y in {0,...,3} 
	\node  (\x\y) at (1.5*\x,1.5*\y) {};
	
	\foreach \x in {0,...,3}
	\foreach \y [count=\yi] in {0,...,2}  
	\draw (\x\y)--(\x\yi) (\y\x)--(\yi\x) ;
	
	\draw (13) -- (02);
	\draw (22) -- (11);
	\draw (31) -- (20);
	\draw (11) -- (00);
	
	\node at (03) [draw=none,above] {$x_1$};
	\node at (03) [fill=blue!20] {};
	
	\node at (23) [draw=none,above] {$x_2$};
	\node at (23) [fill=blue!20] {};
	
	\node at (10) [draw=none,below] {$x_3$};
	\node at (10) [fill=blue!20] {};
	
	\node at (32) [draw=none,right] {$x_4$};
	\node at (32) [fill=blue!20] {};
	
	\draw [->,very thick,color=red] (03) -- (02);
	\draw [->,very thick,color=red] (02) -- (12);
	\draw [->,very thick,color=blue] (23) -- (13);
	\draw [->,very thick,color=blue] (13) -- (12);
	\node at (12) [fill=yellow!60] {};
	\node at (12) [draw=none,above right] {$\T(1,0)$};
	
	\draw [->,very thick,color=green] (32) -- (22);
	\draw [->,very thick,color=green] (22) -- (11);
	\draw [->,very thick,color=violet] (10) -- (00);
	\draw [->,very thick,color=violet] (00) -- (11);
	\node at (11) [fill=yellow!60] {};
	\node at (11) [draw=none,above left] {$\T(1,1)$};
	
	\node at (30) [draw=none,below] {$\sink$};
	\node at (30) [fill=red!20] {};
	
	\node at (12) [draw=none,below right] {$u$};
	\node at (11) [draw=none,below right] {$v$};
	\node at (31) [draw=none,below right] {$w$};
	
	\end{tikzpicture}
}
\subfloat[]{
	\begin{tikzpicture}[>=latex]
	\scriptsize
	\tikzstyle{every node} = [circle,draw=black]
	\foreach \x in {0,...,3}
	\foreach \y in {0,...,3} 
	\node  (\x\y) at (1.5*\x,1.5*\y) {};
	
	\foreach \x in {0,...,3}
	\foreach \y [count=\yi] in {0,...,2}  
	\draw (\x\y)--(\x\yi) (\y\x)--(\yi\x) ;
	
	\draw (13) -- (02);
	\draw (22) -- (11);
	\draw (31) -- (20);
	\draw (11) -- (00);
	
	\node at (03) [draw=none,above] {$x_1$};
	\node at (03) [fill=blue!20] {};
	
	\node at (23) [draw=none,above] {$x_2$};
	\node at (23) [fill=blue!20] {};
	
	\node at (10) [draw=none,below] {$x_3$};
	\node at (10) [fill=blue!20] {};
	
	\node at (32) [draw=none,right] {$x_4$};
	\node at (32) [fill=blue!20] {};
	
	\draw [->,very thick,color=orange] (12)--(22);
	\draw [->,very thick,color=orange] (22)--(21);
	\draw [->,very thick,color=orange] (21)--(31);
	
	\draw [->,very thick,color=magenta] (11) -- (10);
	\draw [->,very thick,color=magenta] (10) -- (20);
	\draw [->,very thick,color=magenta] (20) -- (31);
	\node at (31) [fill=yellow!60] {};
	\node at (31) [draw=none,above right] {$\T(0,0)$};
	
	\draw [->,very thick,color=green!50!red] (31) -- (30);
	
	\node at (30) [draw=none,below] {$\sink$};
	\node at (30) [fill=red!20] {};
	
	\node at (12) [draw=none,below right] {$u$};
	\node at (11) [draw=none,below right] {$v$};
	\node at (31) [draw=none,below right] {$w$};
	\end{tikzpicture}
}
    }
    \caption{Example of \flexiblealgo\ algorithm for the function schema of Figure~\ref{fig:func_schema}a: (a) Communication network $G$ (b) Movement of packets for first two time slots. (c) Movement of packets for next four time slots.}
    \label{fig:example}
\end{figure}
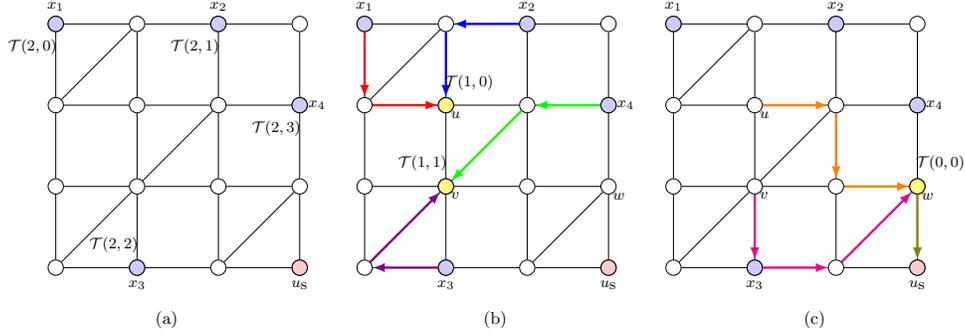

\section{Our Algorithms}
\label{sec:algos}
In this section, we first discuss the random walk-based routing primitive that we use and then we describe our two algorithms for the two in-network computing scenarios discussed before.

\subsection{Routing Scheme Used by Our Algorithms}
\label{subsec:routing}
The nodes in the network follow a slotted time model for communication and only one packet can be transmitted over an edge in a slot.
Our algorithms use a \emph{push} communication model \cite{Mosk-PODC:2006} where at the start of any time step $t$, every node  $u \in V$ selects a node $v$ with probability
\begin{equation}
\probm{u,v} = \begin{cases}
\frac{1}{\degree{u}} & ~\mbox{if } v\in \nbd{u}, \\
0 & ~\mbox{otherwise}
\end{cases}
\label{eq:metropolis_final}
\end{equation}
independent of other nodes and previous time step selections and sends it a randomly chosen data packet from the queue $Q_t(u)$ (if it is not empty). Note that this corresponds to the data packets performing simple random walk on the graph of the network. Next, we present the algorithms for the two in-network computation scenarios.

\subsection{\fixedalgo\ Algorithm}
\label{subsec:fixed_algo}
In the fixed model, we use the algorithm \fixedalgo\  to compute $f_K$ (see Algorithm~\ref{alg:fixed_algo}). In this algorithm, when a node which has a subfunction of $\T$ mapped to it receives a packet it checks to see if the packet is one of the operands of that subfunction. If it is then the node checks its operand queue to see if it has the other operand. If the other operand is available, it combines the two as per the subfunction and moves the combination into its transmission queue. If not, it stores the received packet in the operand queue. If the received packet is not relevant to the subfunction mapped to this node, the node simply places it in its transmission queue for the onward relay. In every time step, the node also chooses a packet uniformly at random and transmits it to a neighbour chosen according to the probability distribution $\probm{\cdot,\cdot}$.  An example run of the \fixedalgo\ algorithm is shown in Figure~\ref{fig:example_fixed}.

\begin{algorithm} [t] 
   \caption{\textbf{Fixed Random-Compute} Algorithm run by node $u$ at time step $t$}
   {\label{alg:fixed_algo}}
   \begin{algorithmic}[1] 
       \REQUIRE Node $u\in V$ that knows the two operands and the operator 
         involved in $\phi^{-1}(u)$ which is a node of $\cal{T}$. If $\phi^{-1}(u)$ is not defined then $\mathcal{C}_t(u)$ is trivially assumed to be empty.
               \IF {$Q_t(u)$ is non-empty}
                   {\STATE $u$ picks $v \in \nbd{u}$ with probability $\probm{u,v}$
                     \STATE Transmits a packet chosen uniformly at random from $Q_t(u)$ to $v$}
                     \ENDIF
               \IF {$u$ receives a packet $p$ in time step $t$} 
               {\IF {$p$ is an operand for $\phi^{-1}(u)$}
               {\IF {$\exists q \in \mathcal{C}_t(u)$ such that $p, q$ are combined at $\phi^{-1}(u)$} \STATE {Combine $p,q$ as per $\phi^{-1}(u)$ and place the combination in $Q_{t+1}(u)$}
               \ELSE \STATE {Place $p$ in $\mathcal{C}_{t+1}(u)$} \ENDIF}
             \ELSE   \STATE {Place $p$ in $Q_{t+1}(u)$} 
               \ENDIF}
               \ENDIF
   \end{algorithmic}
\end{algorithm}

\subsection{\flexiblealgo\ Algorithm} 
\label{subsec:flexible_algo}
We propose the algorithm \flexiblealgo\ for the flexible computation scenario (see Algorithm~\ref{alg:flexi_algo}). This algorithm works by performing combinations allowed by $\T$ opportunistically: when a node receives a new packet from a neighbour it checks its transmission queue to see if that packet can be combined with any packet currently in the queue. If such a combination is possible, it performs it and places the combined value in its transmission queue. As in the case of \fixedalgo\ here too in every time step the node also chooses a packet uniformly at random and transmits it to a neighbour chosen according to the probability distribution $\probm{\cdot,\cdot}$. An example run of the \flexiblealgo\ algorithm is shown in Figure~\ref{fig:example}.
\begin{algorithm}[t] 
   \caption{\textbf{Flexible Random-Compute} Algorithm run by node $u$ at time step $t$}
   {\label{alg:flexi_algo}}
   \begin{algorithmic}[1] 
       \REQUIRE Node $u\in V$ that knows the entire schema $\T.$ 
               \IF {$Q_t(u)$ is non-empty}
                   {\STATE $u$ picks $v \in \nbd{u}$ with probability $\probm{u,v}$
                    \STATE Transmits a packet chosen uniformly at random from $Q_t(u)$ to $v$}
                     \ENDIF
               \IF {$u$ receives a packet $p$ in time step $t$} 
               {\IF {$\exists q \in Q_t(u)$ such that $p, q$ can be combined as per $\T$}  {\STATE Combine $p,q$ as per $\T$ and place the combination in $Q_{t+1}(u)$}
               \ELSE \STATE {Place $p$ in $Q_{t+1}(u)$} \ENDIF}
               \ENDIF
   \end{algorithmic}
\end{algorithm}

\section{Performance Metrics and Our Results}
\label{sec:metrics_results}
In this section, we define our performance metrics, state the main theorems reflecting analysis of those metrics and then, we discuss the consequences of those theorems using examples based on some common network topologies.

Typically in-network computation algorithms are analysed on two metrics: (a) the {\em rate} at which the computation can be carried out given a model of regular data generation~\cite{Mosk-PODC:2006,Kannan-JSAC:2013} and (b) the {\em computation time} or the {\em delay} in computing the function under the assumption that all operands are available at their respective source nodes. These two correspond to notions of the throughput and latency in the in-network computation setting. 

For our rate analysis, we consider the independent Bernoulli data generation model with parameter $\beta$ as described in Section~\ref{subsec:dat_genration_sink}. Note that this data generation model is stochastic and we use techniques from queuing theory to define the best possible function computation rate. The condition for the system to continue processing the data in a regular fashion is known as \emph{stability} in the queueing literature \cite{Gross-BOOK:2008} and is characterised by the fact that the expected size of each queue is finite.  A rate $\beta$ that allows a given algorithm to achieve this condition is termed as the {\em stable rate of computation} for it.

Formally, following Szpankowski~\cite{Szpankowski-TechReport:1989}, we formally define the stable rate of computation of any in-network computation algorithm working under the independent Bernoulli data generation model as follows:
\begin{definition}[Stable rate of computation]
    For any in-network computation algorithm with the data generation as independent Bernoulli process with parameter $\beta$ at all source nodes and with $\lvert V \rvert $ dimensional vector $Q_t^\beta$ representing the state of the network at time $t$, the data rate $\beta$ is said to be stable if the following holds \[ \lim_{t \rightarrow \infty} \prob{\vert| Q^{\beta}_t \vert|_{\infty} < x} = F(x), \mbox{ and }\lim_{x \rightarrow \infty} F(x) = 1, \]
where $\vert| Q^{\beta}_t \vert|_{\infty} = \max \{Q^{\beta}_t(u) : u \in V\}$ and $F(x)$ is the limiting distribution.
\end{definition}

We have dropped superscript $\beta$ from our data transmission queue representation where the rate is understood. Now, we will present the bounds for such a stable rate for the \fixedalgo\ algorithm. 
\begin{theorem}
    \label{thm:rate}
    Given a network $G=(V,E)$ with source set $V_s \subset V$ of $|V_s|=K$ source nodes each receiving independent Bernoulli arrivals with stable rate $\beta$ and a function schema $\T$ to compute a function $f_K$ with binary tree schema $\T$ on $K$ sources, the stable rate $\beta$ of \fixedalgo\ algorithm is given by:
    \begin{equation*}
    \frac{1-\lambda_2}{2\sqrt{3}(K-1)} \left(\frac{d_{min}}{d_{max}}\right)^{1/2}\leq \beta \leq \delta,
    \end{equation*} 
where $\lambda_2$ is the second largest eigenvalue of the transition matrix $\mathcal{P}$ of \fixedalgo,\ $d_{min}$ and $d_{max}$ are the minimum and maximum degree of graph $G$, $\delta_i := \min_{U \subset V: i \in U, \sink \notin U} \sum_{u \in U, v \notin U} \probm{u,v}$ is the mincut between node $i$ and sink $\sink$ and $\delta := \min_{i \in V} \delta_i$ is the min-mincut of $G$.
\end{theorem}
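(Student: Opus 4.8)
The plan is to establish the two inequalities by entirely different means: the upper bound $\beta \le \delta$ as a converse (necessity) via a cut-capacity argument, and the lower bound as an achievability (sufficiency) statement via a drift analysis whose time scale is governed by the random-walk hitting time.

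For the upper bound I would argue that stability forces a flow balance across every vertex cut. Fix any $U \subset V$ with $\sink \notin U$, and call $\sum_{u\in U, v\notin U}\probm{u,v}$ its capacity. By the transmitted-gossip constraint each node transmits at most one packet per slot, so a node $u\in U$ sends a packet across the cut in a given slot with probability at most $\sum_{v\notin U}\probm{u,v}$; summing over $u\in U$ shows the expected number of packets crossing $(U,V\setminus U)$ per slot is at most the cut capacity. On the other hand, in the independent Bernoulli model a fresh round completes at rate $\beta$, so the associated function value (or the partial result assembled inside $U$) must be delivered across the cut toward $\sink$ at rate $\beta$ in steady state, and demanding more than the cut capacity would make the boundary queues grow without bound. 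Taking the cut of smallest capacity, i.e. the min-mincut $\delta$, yields $\beta \le \delta$. The one delicate point is arguing that traffic bound for the sink genuinely crosses the worst cut; here I would use that the walks visit every vertex, so every region carries sink-bound packets.

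For the lower bound I would show that any $\beta$ below the stated threshold makes the queue-length vector $Q_t$ positive recurrent via a Foster--Lyapunov drift argument. The schema $\T$ decomposes a round into the $K-1$ binary combinations at its internal nodes; in the fixed model each operand and each partial result travels by simple random walk to the prescribed network node $\phi(\T(i,j))$, and the natural duration of one such traversal is the hitting time of the walk to that node. The central estimate is therefore a spectral bound on the worst-case hitting time, $\thits \le \frac{2\sqrt3}{1-\lambda_2}\left(\frac{d_{max}}{d_{min}}\right)^{1/2}$, which I would derive from the spectral decomposition of the reversible matrix $\mathcal{P}$ together with $\pi_u = d_u/2m$: the gap $1-\lambda_2$ controls the fundamental matrix, while the factor $(d_{max}/d_{min})^{1/2}$ reflects the non-uniformity of the stationary distribution. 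Feeding this into the drift computation, the expected work injected per round is at most $K-1$ traversals each of mean duration $\thits$, and a work-conservation comparison shows the drift is negative whenever $(K-1)\beta\thits < 1$; substituting the hitting-time bound turns this into $\beta < \frac{1-\lambda_2}{2\sqrt3(K-1)}(d_{min}/d_{max})^{1/2}$.

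I expect the main obstacle to be making this drift argument rigorous in the presence of coalescence, which couples the arrival streams at different queues and destroys independence. The cleanest route I see is a monotone, dominant-system argument: replace the true coalescing dynamics by one in which combinations are only ever delayed and never accelerated, so that stability of the dominant system --- whose queues can be analysed as essentially independent random-walk delay lines with the hitting-time service estimate above --- implies stability of the true system by stochastic monotonicity. Reconciling the constants (the $2\sqrt3$, and the $K-1$ combinations versus the $2(K-1)$ edges of the schema tree) and verifying that the saturated throughput of the combination nodes exceeds the injection rate are the steps I anticipate will demand the most care.
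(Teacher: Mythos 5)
Your upper bound is fine in spirit: a cut-capacity converse is essentially what the paper invokes by appealing to the proof technique of Banerjee et al.\ for divisible functions (any combination order allowed), whose maximum stable rate is the min-mincut, and which therefore dominates the rate of the order-constrained schema $\T$. The genuine gap is in your lower bound, and it is the central estimate: the claimed spectral bound $\thits \leq \frac{2\sqrt{3}}{1-\lambda_2}\left(\frac{d_{max}}{d_{min}}\right)^{1/2}$ is false. The correct spectral bound for a reversible walk is $\thits \leq \frac{1}{(1-\lambda_2)\,\pi_{min}} \leq \frac{2m}{(1-\lambda_2)\,d_{min}}$, which necessarily carries a volume factor of order $2m/d_{min}$ (typically $\Theta(n)$), not $(d_{max}/d_{min})^{1/2}$. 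On the complete graph, for instance, your bound would give $\thits = O(1)$ while in fact $\thits = \Theta(n)$. The companion step is also mis-calibrated: the stability threshold ``$(K-1)\beta\,\thits < 1$'' treats the network as a single server, whereas under the transmitted gossip constraint up to $n$ transmissions occur in parallel per slot, so the correct work-versus-capacity balance has an extra factor of order $n$ that cancels the volume factor in the true hitting-time bound. Your two errors roughly cancel to reproduce the stated constant, but neither step survives scrutiny, so the drift argument as proposed does not establish the theorem.

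The paper's actual route for the lower bound is different and avoids hitting times altogether: it decomposes the schema into its $K-1$ binary coalescences, views each coalescence at its designated node $\phi^{-1}(u)$ as a \emph{two-source data collection} problem, and cites the throughput lower bound of Gillani et al.\ for random-walk data collection, $\beta \geq (1-\lambda_2)\sqrt{d_{min}/(d_{max}\cdot 2s(s+1))}$ with $s$ sources; setting $s=2$ gives $\sqrt{2s(s+1)} = 2\sqrt{3}$, which is where that constant comes from (not from a hitting-time estimate, as you conjectured). Since the $K-1$ collection subproblems run simultaneously and each may saturate some edge, the per-coalescence rate is scaled down by $K-1$, yielding $\beta \geq \frac{1-\lambda_2}{2\sqrt{3}(K-1)}\left(\frac{d_{min}}{d_{max}}\right)^{1/2}$. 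If you want to pursue your drift/dominant-system idea, the queueing machinery (including the monotone comparison to handle coalescence-induced dependence) would have to be rebuilt at the level of per-node load, essentially re-proving the cited collection bound, rather than routed through a worst-case hitting time.
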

Regarding the stable rate for the flexible model, we do not have any bounds for this model for now. We discuss it in Section~\ref{sec:conclusion} as part of the future work.

\paragraph{Rate results: Discussion and examples}

\begin{table}
\caption[Rate lower bounds]{Rate lower bounds of Fixed Random-Compute for various graphs}
\label{table:rate_bounds}
\begin{center}
\scalebox{0.85}{
\begin{tabular}{lccc}
\hline
\textbf{Graph} & \textbf{$\lambda_2$} & \textbf{Using Theorem~\ref{thm:rate}} & \textbf{Using Exact Rate}\\
\hline\\
Cycle &$1-O(\frac{1}{n^2})$\cite{Levin-BOOK:2009} &$\dfrac{1}{2~\sqrt[]{3}~n^2(K-1)}$ &$\dfrac{1}{(n-1)(K-1)}$\\
Star Graph with sink at centre & &\\
and $\epsilon$ as self loop probability at each node &$0$ &$\dfrac{1}{2~\sqrt[]{3}(n-1)^{1/2}(K-1)}$ &$\dfrac{1-\epsilon}{(K-1)}$\\
Star Graph with\\sink at outer node &$0$ &$\dfrac{1}{2~\sqrt[]{3}(n-1)^{1/2}(K-1)}$ &$\dfrac{1}{2(n-1)(K-1)}$ \\
$m$-dimension\\Hypercube with $n=2^m$ &$1-\frac{1}{\log n}$\cite{Levin-BOOK:2009} &$\dfrac{1}{2~\sqrt[]{3}\log n~(K-1)}$ &$\dfrac{n}{(n+2\log n)(K-1)}$\\
Complete graph &$\frac{-1}{n-1}$ &$\dfrac{n}{2~\sqrt[]{3}(n-1)~(K-1)}$ &$\dfrac{n}{3(n-1)(K-1)}$\\
Random Geometric\\Graph $\big(G(n,r)\big)$ &$1-\Theta(r^2)$\cite{Shah-ICASSP:2009} &$\dfrac{\log n}{2~\sqrt[]{3}~n(K-1)}$ &-\\
\hline
\end{tabular}}
\end{center}
\end{table}
In Table~\ref{table:rate_bounds}, we present two rate bounds for some common network topologies. First we present the lower bounds obtained from Theorem~\ref{thm:rate} and then we present the lower bounds obtained by calculating the exact values using elementary algebra (see~\cite{Gillani-arXiv:2017} for details). We note that for the complete graph the lower bound given by Theorem~\ref{thm:rate} is tight up to small constants, but in the case of the cycle a direct calculation gives us a rate that is higher by $\Theta(n)$.

Our next metric related to the delay or the latency is what we call the average function computation time of an in-network computation algorithm. Now, under our regular data generation model with rate $\beta$ let the computed value of function $f_K$ for the $j^{th}$ data round be $f_K(j):=f_K(x_1(j), x_2(j),
\ldots, x_K(j))$ and let random variable $Y_{f_K(j)}^t$ denote the position of the $j^{th}$ round computed value $f_K(j)$ at the start of time slot $t$ given that the data packets perform simple random walk on $G$. Then, the function computation time of $\ell$ data rounds $\tcomp$ is defined as,
$$\tcomp := \min_t \{Y_{f_K(j)}^t = \sink , \forall~ 1 \leq j \leq \ell\}.$$
So, this represents the earliest time by which the computed value of $\ell$ data rounds is available at the sink. So, now we can define the \emph{average computation time} as follows.
\begin{definition}[Average function computation time]
\label{def:collection_time_k} 
The average function computation time for the network is defined as \[\tbarfk = \lim\limits_{\ell\rightarrow \infty}\dfrac{\tcomp}{\ell}\]
where $\tcomp$ is the the function computation time of the first $\ell$ rounds of data.
\end{definition}

For \fixedalgo\ we prove the following theorem:

\begin{theorem}[Fixed model computation time]
    \label{thm:fixed_time}
    
    Given a network $G=(V,E)$ of $|V|=n$ nodes with set $V_s \subset V$ of $|V_s|=K$ source nodes each receiving independent Bernoulli arrivals with stable rate $\beta$ and a function $f_K$ with binary tree schema $\T,$ for $\beta \leq 1/2$, the average function computation time of $f_K$ using \fixedalgo\ is
    \begin{equation*}
    \tbarfk \leq \alpha \log K \left(\dfrac{1}{\beta}+ \dfrac{h\thits}{1 - c(\beta)}\right)
    \end{equation*}
    where $\alpha>1$ is a constant, $h$ is the height of the binary tree schema $\T$, $c(\beta)\in[0,1]$ is a continuous and increasing function of $\beta$ with $c(0)=0$ and $c(\beta) \rightarrow 1$ as $\beta \rightarrow \beta^*$ where $\beta^*$ is the critical rate below which data rates are stable and above which they are unstable and $\thits=\max\limits_{x,y \in V} \mathbb{E}_x(\tau_y)$ is the worst-case hitting time of simple random walk on $G$. 
\end{theorem}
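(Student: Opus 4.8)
The plan is to track, for a single data round $j$, the time at which each subfunction of the schema is produced and then routed up the tree, and to control how these per-round times accumulate as $j$ grows. Since the mapping $\phi$ is fixed and one-to-one, the operand $\theta(i+1,k)$ must perform a simple random walk from its production site $\phi(\T(i+1,k))$ to the site $\phi(\T(i,j'))$ of its parent subfunction, where it is buffered in $\mathcal{C}_t(\cdot)$ until its sibling arrives and the $\oplus$ operation fires. Writing $T(i,j')$ for the time at which $\theta(i,j')$ is created, this yields the recursion
\begin{equation*}
T(i,j') = \max_{k \in \{2j', 2j'+1\}}\bigl(T(i+1,k) + W_{i+1,k}\bigr),
\end{equation*}
where $W_{i+1,k}$ is the transit time of operand $\theta(i+1,k)$ across one schema edge and the base cases $T(h,\cdot)$ are the source generation times. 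Unrolling from the root and appending the final walk into $\sink$, the completion time of round $j$ is at most $\max_{P}\sum_{e\in P} W_e$ over the (at most $K$) root-to-leaf paths $P$.

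The first step is to bound a single transit time $W_e$. If the network carried one packet, $W_e$ would be exactly a hitting time, so $\ex{W_e}\le \thits$ by definition of $\thits$. Under the push model, however, a node forwards only one uniformly chosen packet from $Q_t(u)$ per slot, so a tagged packet waits behind the other packets resident at the nodes it visits. I would capture this slowdown by a load factor: under stability the long-run fraction of time a relevant transmission queue is busy is some $c(\beta)\in[0,1)$, increasing in $\beta$, with $c(0)=0$ and $c(\beta)\to1$ as $\beta\uparrow\beta^*$ (the threshold of Theorem~\ref{thm:rate}). A renewal/Little's-law argument then inflates the expected single-edge transit to $\ex{W_e}\le \thits/(1-c(\beta))$ and, crucially, equips $W_e$ with a geometric-type tail around this mean.

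Next I would pass from one edge to a path and then to the maximum over paths. Along any fixed root-to-leaf path there are $h$ edges, so the path sum has mean at most $h\thits/(1-c(\beta))$, and the geometric tails let a Chernoff bound concentrate it about this mean. A union bound over the at most $K$ paths shows $\max_P \sum_{e\in P} W_e$ exceeds a constant multiple of $\log K \cdot h\thits/(1-c(\beta))$ only with small probability, producing that contribution. The generation term is treated by the same machinery: the $j$-th packet at a source arrives after a sum of $j$ independent $\mathrm{Geom}(\beta)$ waits of mean $j/\beta$, and maximizing over the $K$ sources contributes the $1/\beta$ term carrying the same logarithmic factor. Collecting all the concentration constants and logarithmic factors into a single $\alpha>1$ and $\log K$, then dividing by $\ell$ and letting $\ell\to\infty$ so that the per-round increments dominate, gives the stated bound on $\tbarfk$.

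The hard part will be the congestion analysis behind $1/(1-c(\beta))$. The walks of distinct packets and distinct rounds are \emph{not} independent: they compete for the single transmission slot at each node under the transmitted-gossip constraint, and they coalesce at the $\phi$-images, so the residual service a tagged packet sees is a complicated functional of the entire arrival process. Making $c(\beta)$ well defined, monotone, and blowing up at exactly the stability threshold $\beta^*$, while still retaining enough tail control on each $W_e$ to run the Chernoff and union-bound steps, is the delicate point; the hypothesis $\beta\le 1/2$ is what keeps the per-slot interference mild enough for these estimates to close.
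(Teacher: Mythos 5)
Your proposal is correct and follows essentially the same route as the paper's proof: the same split into an appearance-time term plus a per-round tree recursion reducing to hitting events each bounded by $\thits$ with geometric tails, the same queueing inflation factor $1/(1-c(\beta))$ with $c(\beta)$ the limiting maximum probability that a node's queue holds at least two packets, and the same union-bound mechanism producing the $\log K$ factor before dividing by $\ell$ and taking the limit. The paper merely executes the details slightly differently---it relaxes your max-of-path-sums to a sum-over-levels of maxima via $\max\{W+X,Y+Z\}\le\max\{W,Y\}+\max\{X,Z\}$ and bounds the expected maximum over all $2(K-1)$ hitting events directly (Markov's inequality plus slot-splitting) rather than Chernoff-concentrating path sums---and note that the hypothesis $\beta\le 1/2$ is actually used in the appearance-time (generation) estimate, not, as you suggest, in the congestion analysis.
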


For \flexiblealgo\ we have the following theorem:
\begin{theorem}[Flexible model computation time]
    \label{thm:flexible_time}

    Given a network $G=(V,E)$ of $n$ nodes with a source set $V_s \subset V$ of $|V_s|=K$ source nodes each receiving independent Bernoulli arrivals with stable rate $\beta$ and a function $f_K$ with binary tree schema $\T,$ for $\beta \leq 1/2$, the average function computation time of $f_K$ using \flexiblealgo\ is
    \begin{equation*}
    \tbarfk \leq \hat{\alpha}\log K \left(\dfrac{1}{\beta}+ \dfrac{h\tmix^G}{1 - c(\beta)}\left(\log^{2}{K}+\dfrac{n}{\nu\log n}\right)\right)
    \end{equation*}
    where $\hat{\alpha}>1$ is a constant, $h$ is the height of the binary tree schema $\T$, $c(\beta)\in[0,1]$ is a continuous and increasing function of $\beta$ with $c(0)=0$ and $c(\beta) \rightarrow 1$ as $\beta \rightarrow \beta^*$ where $\beta^*$ is the critical rate below which data rates are stable and above which they are unstable. Here, $\tmix^G$ is the mixing time of simple random walk on $G$,  $\nu =\sum_{v \in V}{(\degree{v}^2)}/{d^2n}$ where, $\degree{v}$ is the degree of any vertex $v \in V$ and $d = 2m/n$ is the average vertex degree. 
\end{theorem}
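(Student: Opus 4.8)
The plan is to mirror the two–phase decomposition that drives the Fixed–model bound in Theorem~\ref{thm:fixed_time}, with the single structural change that a subfunction is now computed \emph{opportunistically} wherever its two operand packets happen to coincide, so the governing quantity becomes the \emph{meeting time} of two random walks rather than the \emph{hitting time} of a prescribed node. Concretely, for a single data round I would split the time to deliver the round value to $\sink$ into (i) a \emph{generation phase}, in which all $K$ sources emit their round-$\ell$ packets, and (ii) an \emph{in-network phase}, in which the packets perform simple random walks (as guaranteed by the push routing of \eqref{eq:metropolis_final}), coalesce in the order dictated by $\T$, and the root value is routed to $\sink$. The generation phase supplies the $\tfrac{1}{\beta}$ term: each source emits after a $\mathrm{Geometric}(\beta)$ delay, and the maximum of $K$ such independent delays concentrates around $\tfrac{\log K}{\beta}$, which is the source of the outer $\log K$ multiplying $1/\beta$.

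For the in-network phase I would argue level by level along $\T$. Tracing the root value back to the leaves exposes a chain of $h$ sequential combinations: a level-$i$ value cannot form until both level-$(i+1)$ children have formed and their carrier walks have met, which yields the factor $h$. The key step is to bound a single designated meeting. The basic ingredient is that two stationary simple random walks collide at a common node with probability $\sum_{v}\pi(v)^2=\nu/n$ per step, using $\pi(v)=\degree{v}/(dn)$ and the definition of $\nu$. After $\tmix^G$ steps each walk is near-stationary, and over a window of length $\tmix^G$ the designated pair meets with probability of order $\tfrac{\nu\log n}{n}$, so $O\!\bigl(\tfrac{n}{\nu\log n}\bigr)$ windows — hence time $O\!\bigl(\tmix^G\cdot\tfrac{n}{\nu\log n}\bigr)$ — suffice for one meeting; I would import the rigorous meeting/coalescence bounds of \cite{Cooper-SIAM:2013,Kanade-arXiv:2017} at this point. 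This is exactly the $\tfrac{n}{\nu\log n}$ term. At each level up to $\Theta(K)$ such designated meetings must all complete before the next level can proceed, and controlling the slowest of them through the exponential tail of the meeting time costs logarithmic-in-$K$ factors; these appear as the $\log^{2}K$ term and reinforce the outer $\log K$. Summing the per-meeting cost over the $h$ sequential levels yields the $h\,\tmix^G\bigl(\log^{2}K+\tfrac{n}{\nu\log n}\bigr)$ contribution.

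It remains to account for queueing. A tagged packet advances only when it is the packet chosen uniformly from its node's transmission queue, so its random-walk clock is slowed by a factor that grows with the typical queue length, which in turn grows as $\beta$ approaches the critical rate $\beta^{*}$ identified by the stability framework underlying Theorem~\ref{thm:rate}. I would absorb this slowdown into the factor $\tfrac{1}{1-c(\beta)}$, with $c(\beta)$ increasing from $c(0)=0$ to $1$ as $\beta\to\beta^{*}$, precisely as in the Fixed model; multiplying the per-meeting bound by this factor produces the $\tfrac{h\,\tmix^G}{1-c(\beta)}\bigl(\log^{2}K+\tfrac{n}{\nu\log n}\bigr)$ term. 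Combining the two phases, multiplying by the outer $\log K$, folding constants into $\hat\alpha$, and finally dividing by $\ell$ and letting $\ell\to\infty$ as in the definition of $\tbarfk$ gives the stated bound.

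The main obstacle I anticipate is making the meeting-time argument rigorous despite two features absent from the free coalescing-walk model. First, the walks are not independent simple random walks: packets share queues, are slowed inhomogeneously by queueing, and start from non-stationary positions, so the clean per-window collision estimate behind the $\tfrac{n}{\nu\log n}$ term must be justified under these correlations — the cleanest route is to stochastically dominate the true dynamics by independent slowed walks and fold all slowdown into $\tfrac{1}{1-c(\beta)}$. Second, the coalescence forced by $\T$ is \emph{structured} rather than free: a packet may combine only with its designated sibling, so the off-the-shelf bounds must be applied to fixed pairs and stitched across the $h$ levels without the tail error terms compounding, which is exactly where the $\log^{2}K$ slack and the high-probability union bound over the $\Theta(K)$ meetings per level do the work.
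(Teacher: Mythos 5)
Your proposal is correct in substance and reaches the stated bound, but the heart of your level analysis takes a genuinely different route from the paper's. The outer structure is shared: an appearance-time term of order $\log (eK)\,b/\beta$, reduction to one-round computation time, sequential treatment of the $h$ levels of $\T$, and queueing delay absorbed into the factor $1/(1-c(\beta))$ exactly as in the proof of Theorem~\ref{thm:fixed_time}. Where you diverge is inside a level: you bound each of the at most $K/2$ designated pairs separately, viewing a pair as a $2$-fold product walk whose diagonal has stationary mass $\sum_v \pi(v)^2 = \nu/n$, obtaining a per-pair expected meeting time $O(\tmix^G(1 + n/(\nu\log n)))$, and you control the slowest pair by exponential tails plus a union bound (which, usefully, requires no independence across pairs), paying a single $\log K$ factor. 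The paper instead lifts all $k=2^i$ packets of a level to one walk on the product graph $Q_k$, contracts the configurations in which some designated pair coincides to a vertex $\gamma_k$ with $\pi_{\gamma_k} = \Omega(k\nu/n)$ (Lemma~\ref{lm:probmeasure_gamma}; your $\nu/n$ computation is precisely its $k=2$ case), bounds the \emph{first} coalescence time as in Eq.~\ref{eq:firstmeeting_2}, and finishes the level with a halving argument around the threshold $l^*$ of Eq.~\ref{eq:k_star}, which is where its $\log^{3}K = \log K \cdot \log^{2}K$ term originates. Your per-pair route is simpler, exploits the fact that the schema fixes the pairing (so the many-walk speed-up $n/(\nu k)$ that the product-chain machinery inherited from \cite{Cooper-SIAM:2013} is designed to capture is unnecessary here), and in fact yields the slightly stronger per-level bound $\tmix^G \log K\,(1 + n/(\nu \log n))$, i.e.\ a constant where the theorem has $\log^{2}K$. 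One caveat: your window heuristic --- ``in a window of length $\tmix^G$ a stationary pair meets with probability of order $\nu \log n / n$'' --- is not valid as stated, because meetings cluster, so the expected number of meetings in a window does not lower-bound the probability of at least one; the rigorous substitute is the hitting-time-from-stationarity identity $\mathbb{E}_{\pi}(\tau_{\gamma_2}) = Z_{\gamma_2\gamma_2}/\pi_{\gamma_2} \leq \frac{1}{1-\lambda_2}\cdot\frac{n}{\nu}$ together with the tail bound of Lemma~\ref{lm:prob_notvisit}, which is exactly the machinery you defer to when citing \cite{Cooper-SIAM:2013}, so this is a gap of exposition rather than of substance.
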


\paragraph*{Computation time results: Discussion and examples}
Before comparing the two models, let us look at an important parameter $\nu$ in the flexible model latency result  which represents the variability in the vertex degree. Recall $\nu =\sum_{v \in V}{(\degree{v}^2)}/{d^2n}$ where, $d$ is the average vertex degree. For regular graphs like cycle $\nu=1$ whereas for skewed graphs like star $\nu=O(n)$. This parameter affects the flexible model latency as we can see the star graph has a much lower latency then the cycle graph (see Table~\ref{table:comptime_fixed_flexible}). This is because in the skewed graphs like star higher degree nodes allow more coalescences which are critical for our analysis, whereas it is not so in the regular graphs as all nodes are similar.
 
\begin{table}
\caption[Comparison of average function computation time]{Comparison of the computation time of Fixed and Flexible Random-Compute for various graphs for $\beta<\beta^*$ i.e. $c(\beta)<1$ (All bounds on the mixing and hitting times
appear directly or implicitly in \cite{Aldous-BOOK:2002}.)}
\label{table:comptime_fixed_flexible}
\begin{center}
\begin{tabular}{lcc}
\hline
\textbf{Graph} & \textbf{Fixed Random-Compute} & \textbf{Flexible Random-Compute}\\
\hline\\
Cycle &$O\Big(\dfrac{\log K}{\beta}+ hn^2\log K\Big)$ &$O\Big(\dfrac{\log K}{\beta}+ hn^3\dfrac{\log K}{\log n}\Big)$ \\
Star with sink at centre &$O\Big(\dfrac{\log K}{\beta}+ hn \log K\Big)$ &$O\Big(\dfrac{\log K}{\beta}+ h\log^3K\Big)$ \\
Hypercube &$O\Big(\dfrac{\log K}{\beta}+ hn\log K\Big)$ &$O\Big(\dfrac{\log K}{\beta}+ hn  \log \log n\log K\Big)$ \\
Rand. $r-$reg. &$O\Big(\dfrac{\log K}{\beta}+ hn \log K \Big)$ &$O\Big(\dfrac{\log K}{\beta}+ hn \log K \Big)$ \\
Torus $(d=2)$ &$O\Big(\dfrac{\log K}{\beta}+ hn\log n \log K \Big)$ &$O\Big(\dfrac{\log K}{\beta}+ h n^2\dfrac{\log K}{\log n}\Big)$ \\
Torus $(d>2)$ &$O\Big(\dfrac{\log K}{\beta}+ hn \log K\Big)$ &$O\Big(\dfrac{\log K}{\beta}+ hn^{\frac{d+2}{d}}\dfrac{\log K}{\log n}\Big)$ \\
Rand. geometric graph &- &$O\Big(\dfrac{\log K}{\beta}+ hn^2\dfrac{\log K}{\log n}\Big)$ \\
\hline
\end{tabular}
\end{center}
\end{table}
In Table~\ref{table:comptime_fixed_flexible}, we compare the computation time of Fixed and Flexible Random-Compute for various graphs given that $\beta<\beta^*$ i.e. $c(\beta)<1$ where $c(\beta)\in[0,1]$ is a continuous and increasing function of $\beta$. The given condition ensures that the queues at the nodes are finite and function computation is carried in stable manner. 

Consider Rand $r$- reg graph, in this case the performance of both the algorithms is same whereas in the cycle graph \fixedalgo\ is better than the \flexiblealgo\ but in the star graph latter dominates the former. So, the upper bounds in the table do not clarify whether \flexiblealgo\ is better from a computation time point of view or \fixedalgo. Let us now analyse two models based on their working. Under the fixed schema, two operands that can be combined may meet at some node but may not be combined since the node at which they meet is not the designated node where they are to be combined. This is a major disadvantage that the flexible schema does not have: in the flexible schema the first time two combinable operands meet they are combined. However, the fixed model counterbalances this disadvantage with a significant advantage: since any node designated to combine two operands knows which specific operands it is to combine, it can recognise and store the first of the operands separately and wait for the second operand to arrive. The flexible model cannot do this and must perform opportunistic combinations. It is an open problem to try and establish if the computation time of \fixedalgo\ is stochastically dominated by the computation time of \flexiblealgo\ or vice-versa. 

In next section, we will discuss the proofs of theorems discussed in this section.

\section{Proofs}
\label{sec:proofs}
In this section, we present the detailed proofs of all the theorems. First, we prove the bounds on the stable rate of computation for the fixed model. Then, we give an upper bound on the average function computation time of the fixed and flexible model in terms of the basic random walk primitives: the hitting time and the mixing time respectively.

\subsection{Rate Analysis}
\label{subsec:rate}
Szpankowski defines the stable data rate as the rate which ensures that the queues at all nodes in $V$ are bounded (see Section~\ref{sec:metrics_results}). Now, we discuss the proof of the theorem giving bounds for such stable rate of computation for \fixedalgo\ algorithm. We prove this theorem using an existing rate result of \cite{Gillani-arXiv:2017} and \cite{Banerjee-QS:2012}.

\begin{proof}[Proof of Theorem~\ref{thm:rate}]
First, we will prove a lower bound on the stable rate of the \fixedalgo\ . For this we will use the lower bound on the rate of data collection by an algorithm similar to ours. In an earlier work \cite{Gillani-arXiv:2017} authors propose an algorithm (called Random-Collect) collecting data packets from source nodes to a designated node called sink. In their setting, there is no combination of data packets and all the source data packets travel till they reach the sink. Random-Collect algorithm is a generalised algorithm where data packets perform random walk on the network and the specific type of random walk would depend on the transition probability of moving from one node to another like the transition probability given by Eq. \ref{eq:metropolis_final} allows packets to move in a simple random walk fashion. 
We will use the following simple random walk specific result from \cite{Gillani-arXiv:2017} which proves a lower bound on the stable data rate in that paper.

\begin{lemma}[Network throughput lower bound \cite{Gillani-arXiv:2017}]
\label{thm:rate_bounds}
For a given graph $G=(V,E)$ with $|V|=n$ nodes, source set $V_s \subseteq V \setminus \{\sink\}$ with $|V_s|=s$ data sources, each generating data as independent Bernoulli arrivals with stable data arrival rate $\beta$, and a single sink, $\sink$, we have that 
\begin{equation}
\beta \geq \left(1 - \lambda_2\right)\sqrt{\frac{d_{min}}{d_{max}~2s(s + 1)}}
\label{eq:lambda_lower_bounds_gen}
\end{equation}
where $\lambda_2$ is the second largest eigenvalue of transition matrix $\mathcal{P}$ of simple random walk on graph $G$, $d_{min}$ and $d_{max}$ are the minimum and maximum degree of nodes of the graph $G$ respectively. These results hold for $\beta < \beta^*$, where $\beta^*$ is the critical rate below which data rates are stable and above which they are unstable.
\end{lemma}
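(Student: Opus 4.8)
The plan is to prove that every data rate strictly below the stated threshold is stable by establishing positive recurrence of the Markov chain whose state is the queue-length vector $Q_t \in \mathbb{Z}_{\geq 0}^{V}$; by the stability criterion recalled in Section~\ref{sec:metrics_results} this is exactly what is needed. I would work with a quadratic Lyapunov function adapted to the simple random walk, for example $V(Q)=\sum_{v \in V} Q(v)^2/\degree{v}$, or the analogue weighted by the reversing measure of $\mathcal{P}$, chosen so that the routing step (which redistributes packets according to the transition matrix $\mathcal{P}$ of Eq.~\ref{eq:metropolis_final}) acts as a self-adjoint operator in the associated inner product. Because packets are absorbed at the sink $\sink$, the stationary (all-ones) eigendirection of $\mathcal{P}$ is removed, so that the effective contraction rate of the routing step on the orthogonal complement is governed by $1-\lambda_2$ rather than by the trivial eigenvalue $1$.

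I would then compute the one-step conditional drift $\ex{V(Q_{t+1})-V(Q_t)\mid Q_t}$ and decompose it into three pieces: an arrival term that, since each of the $s$ sources injects a fresh packet with probability $\beta$, is linear in $\beta$ and in the queue masses; a routing term controlled by $\mathcal{P}$, which the spectral gap lets me bound by a negative quantity proportional to $(1-\lambda_2)$ times a quadratic form in $Q_t$; and lower-order constant terms from the single-packet-per-slot transmissions and from empty-queue boundary effects. Passing between the degree-weighted norm and the Euclidean norm contributes the factor $d_{min}/d_{max}$, while summing the arrival-induced cross terms over the at most $s$ simultaneously active source packets produces the combinatorial factor $s(s+1)$. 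Collecting these, the negative-drift condition becomes a quadratic inequality in $\beta$; solving it for the largest admissible rate yields precisely the square-root threshold $(1-\lambda_2)\sqrt{d_{min}/\bigl(d_{max}\,2s(s+1)\bigr)}$, and the Foster--Lyapunov theorem then certifies positive recurrence, hence stability, for every smaller $\beta$.

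The step I expect to be the main obstacle is reconciling the hard transmission constraint with the spectral contraction: the clean relation $\ex{Q_{t+1}\mid Q_t}=\mathcal{P}^{\top}Q_t$ holds only when every nonempty node forwards exactly one packet, whereas when queues are short the true routing operator is a damped version of $\mathcal{P}$ and the drift picks up correction terms. I would handle this either by a careful case split on which nodes are empty, or by a stochastic-domination comparison against the unconstrained walk, so that the $(1-\lambda_2)$ bound still applies to the dominant quadratic term. A related subtlety is making the absorption at $\sink$ rigorously annihilate the stationary eigendirection, which is what allows the final bound to be expressed through the spectral gap alone rather than through a mincut or hitting-time quantity such as $\thits$.
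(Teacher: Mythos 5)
The first thing to note is that the paper never proves this statement: Lemma~\ref{thm:rate_bounds} is imported verbatim from \cite{Gillani-arXiv:2017} (``We will use the following simple random walk specific result from [Gillani-arXiv:2017] which proves a lower bound on the stable data rate in that paper''), so there is no in-paper proof to compare against and your attempt must stand on its own as a reconstruction of the cited result. As such, it has a genuine gap at its core --- one you partially flag yourself, but whose proposed fixes do not work. Your entire argument rests on the routing step acting as the linear operator $\mathcal{P}$ on the queue vector, so that the spectral gap produces a negative quadratic drift term. But under the one-packet-per-slot service of this model, the one-step conditional drift of the queue vector (restricted off the sink) is
\[
\ex{Q_{t+1}-Q_t \mid Q_t} \;=\; (\tilde{\mathcal{P}}^{\top}-I)\,\mathbf{1}_{\{Q_t>0\}} \;+\; \beta\,\mathbf{1}_{V_s},
\]
where $\tilde{\mathcal{P}}$ is the sink-deleted transition matrix: the routing operator acts on the \emph{indicator vector} of nonempty queues, not on $Q_t$ itself. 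Consequently the drift of any quadratic Lyapunov function $Q^\top D Q$ contains the bilinear term $Q_t^\top D(\tilde{\mathcal{P}}^\top - I)\mathbf{1}_{\{Q_t>0\}}$ rather than the quadratic form $Q_t^\top D(\tilde{\mathcal{P}}^\top - I)Q_t$, and $1-\lambda_2$ gives no control over it: when the backlog concentrates in a few very long queues, $\mathbf{1}_{\{Q_t>0\}}$ is nearly orthogonal to $Q_t$, the service term is $O(1)$ per slot, and no spectral contraction is available. This is not a boundary effect that a ``case split on which nodes are empty'' can absorb; it changes the character of the whole argument. Your fallback --- stochastic domination against the unconstrained walk --- goes the wrong way: removing the transmission constraint only speeds up service, so the unconstrained system (which is stable for \emph{every} $\beta$, since each packet is absorbed in finite expected time) lower-bounds the true queue lengths and cannot certify their stability.

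Beyond this, the quantitative content of the lemma is asserted rather than derived. You state that $d_{min}/d_{max}$ arises from norm equivalence, that $2s(s+1)$ arises from cross terms over the $s$ sources, and that the resulting quadratic inequality in $\beta$ yields ``precisely'' the threshold $(1-\lambda_2)\sqrt{d_{min}/(d_{max}\,2s(s+1))}$; no computation supports any of these claims, and the specific square-root form is being reverse-engineered from the answer. Likewise, the claim that absorption at $\sink$ lets the contraction rate of the sink-deleted (substochastic) matrix be bounded by $\lambda_2$ of the full chain needs its own argument (e.g.\ an interlacing or variational bound), which you do not supply. A Foster--Lyapunov route may well be viable --- and may even be what \cite{Gillani-arXiv:2017} does --- but as written your proposal identifies the right difficulty and then defers exactly the steps on which the result depends.
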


Recall that the schema to compute the function $f_K$ on $K$ sources is given by a binary tree $\T$ to \fixedalgo\ algorithm. Thus, there are $K-1$ coalescences of data packets taking place while computing the function. Recall that in \fixedalgo\ algorithm any coalescence, say $\hat{a}$ of data packets $x_i,x_j,$ can happen only at a particular node $u$ in the network and this is given by a mapping $\phi$ from $\T$ to the vertex set $V$ of the network $G.$ In this case, the node $u$ acts like a source to the data packets generated by the coalescence $\hat{a}$ which is used for the next level coalescence. We restrict the coalescence $\hat{a}$ which happens at node $u.$ This reduces the problem to the collection $x_i,x_j$ by node $u.$ By Eq.~\ref{eq:lambda_lower_bounds_gen}, the rate of collection of $x_i,x_j$ by $u$ using \fixedalgo\ is $\beta_{ij} \geq \frac{1-\lambda_2}{2\sqrt{3}}\left(\frac{d_{min}}{d_{max}}\right)^{1/2}.$ 
	Note that while achieving the rate $\beta_{ij}$, \fixedalgo\ might be using some edge of $G$ at its full capacity. There are $(K-1)$ such coalescences happening in $G$ at the same time thus the rate achieved by each of them without exceeding any edge capacity is $\beta_{ij} \geq \frac{1-\lambda_2}{2\sqrt{3}(K-1)}\left(\frac{d_{min}}{d_{max}}\right)^{1/2}.$ Thus, the stable rate of \fixedalgo\ is lower bounded by $\beta \geq \frac{1-\lambda_2}{2\sqrt{3}(K-1)}\left(\frac{d_{min}}{d_{max}}\right)^{1/2}.$

We now turn to the upper bound. 
Authors in \cite{Banerjee-QS:2012} study the rate of function computation when the function $f_K$ is a \emph{divisible} function. A function $f_K$ is said to be divisible if for any partition of its source data set, $f_K$ can be computed by computing a local operation on any set and aggregating the result \cite{Banerjee-QS:2012}. In other words, they study functions whose source data packets can be combined in any order. In our case, source data packets of $f_K$ can be combined only by the order given by schema $\T.$ Thus, the rate achieved in \cite{Banerjee-QS:2012} is an upper bound on the rate achieved by our algorithm. Authors in \cite{Banerjee-QS:2012} show that maximum stable rate for a divisible function is the \emph{min-mincut} of the network graph $G.$ We define the equivalence of min-mincut for the transition matrix of our algorithm (transition probability given by Eq.~\ref{eq:metropolis_final}) for $G$ as follows: recall that $\sink$ is the sink node in $G$ where the final computed function is collected and let $i$ be any node in the network. Then, the mincut between node $i$ and sink $\sink$ is defined as:
\begin{equation*}
\delta_i := \min_{U \subset V: i \in U, \sink \notin U} \sum_{u \in U, v \notin U} \probm{u,v}.
\end{equation*}
Further, the min-mincut of $G$ is defined as $\delta := \min_{i \in V} \delta_i.$ Following the same proof technique as that of Theorem~1 of \cite{Banerjee-QS:2012} we get that the maximum stable rate $\beta$ achieved by \fixedalgo\ algorithm is upper bounded by $\delta.$

\end{proof}

\subsection{Computation Time Analysis} 
\label{subsec:computation_time}
Next, we discuss the latency in function computation or the average time taken to compute the function under the two given models: fixed and flexible. We start with the proof of theorem dealing with the analysis of the computation time for the fixed model and then, present the proof of the flexible model computation time theorem. In both proofs, we first find the expected maximum time taken by $\ell$ data packets to appear at each source node and then, we find the expected time taken to compute the function for such $\ell$ rounds of data generated. Since, all data rounds are independent we find this time by analysing the single round computation time i.e. the time taken to compute the function value for the $i^{th}$ data round comprising of $|V_s|=K$ data packets with round identifier $i$. For the fixed model, we find the single round computation time by recursively reducing our problem to the hitting event of data packets from two nodes onto a designated third node and then we present an upper bound on the time taken for one such event using the techniques and queueing delay analysis of \cite{Gillani-arXiv:2017}. For the flexible model, single round computation time proof involves a major modification of the proof of Cooper et. al.'s main theorem~\cite{Cooper-SIAM:2013} and also uses techniques from~\cite{Gillani-arXiv:2017} to handle queueing delay.
\begin{proof}[Proof of Theorem~\ref{thm:fixed_time}]
Given a graph $G$ where each node in source set $V_s$ receives independent Bernoulli arrivals with stable rate $\beta$. Each data round generated has a total of $|V_s|=K$ data packets and each round has its appearance and computation time in the network. Instead of individually finding such times for each round, we will proceed our analysis by finding the expected maximum time by which $\ell$ rounds of data arrival have happened and then after this time, we find the expected function computation time assuming all $K\ell$ packets have appeared.  

Let $\tappk$ be the appearance time of $\ell$ rounds of data arrival on each node using \fixedalgo\ algorithm for communication. Let $f$ be a hypothetical node where we assume packets reside before arriving at the network nodes, then, $\forall u \in V_s$ and $1 \leq j \leq \ell$, we have  $X_{u,j}^0 = f$, where $X_{u,j}^t$ is the random variable denoting the position of $j^{th}$ round data packet of node $u$ at the start of time slot $t$. If the $j^{th}$ data item appears at node $u$ at time $t'$ then $X_{u,j}^t = f$ for all $t < t'$ and $X_{u,j}^{t'} = u$. With this notation we can define appearance time as: $\tappk = \min_t\{ X_{u,j}^t \in V : 1\leq j \leq \ell, u\in V_s\}$. So, $\tappk$ is the earliest time when $\ell$ packets have appeared at each node. Now, we find the expected appearance time of $\ell$ rounds. 

Let $A$ be the event that a source node $u \in V_s$ did not receive $\ell$ arrivals in time $t_u$. Consider, $t_u \geq 2\ell$ and $\beta\leq1/2$. So, we have
\begin{equation}
\prob{A} = \sum_{i = 1}^{\ell}\binom{t_u}{\ell - i} \beta^{\ell - i} (1 - \beta)^{t_u - (\ell - i)}
\label{eq:prob_eq_event_A}
\end{equation}
As, $t_u \geq 2\ell$ , $\binom{t_u}{\ell - i} \leq \binom{t_u}{\ell - (i - 1)} \forall i:1 \leq i -1 \leq \ell$ also, as $\beta \leq 1/2$, so Eq.~\eqref{eq:prob_eq_event_A} can be written as, $\prob{A} \leq \ell\binom{t_u}{\ell}(1 - \beta)^{t_u}$. Now, let $t_u = w\ell$ for $w > 1$, so using the fact that $\binom{t_u}{\ell} \leq (\frac{et_u}{\ell})^\ell$ previous equation can be rewritten as, $\prob{A} \leq \ell (ew (1 - \beta)^w)^\ell$.
So, we can easily find $w$ such that $ew (1 - \beta)^w \leq 1/e$. Note that the value of $w$ will be only  $\beta$ dependent as $w=\frac{b}{\beta}$ for some constant $b > 1$.
So, for a node $u \in V_s$, $\prob{t_u \geq w\ell} \leq \ell e^{-\ell}$. Since, all random walks of data packets from all $K$ nodes are independent of each other, considering worst-case analysis of all nodes we have,
\begin{equation}
\prob{\exists u: t_u \geq w\ell} \leq K\ell e^{-\ell}
\label{eq:whp_eq of_event_A}
\end{equation}
Now, we need to find the expected value of the maximum appearance time of packets at $K$ nodes let it be $\hat{t}_u=\{t_u:\exists u ~t_u \geq w\ell\}$. We have,
\begin{align*}
\ex{\hat{t}_u - \log eK}&\leq\sum_{i=1}^{\infty} \prob{\hat{t}_u- \log eK \geq i}\\
&\leq w\sum_{j=1}^{\infty} \prob{\hat{t}_u\geq w(\log eK + j)}\\
&\leq w\sum_{j=1}^{\infty}K(\log eK + j)e^{-(\log eK + j)}\\
&\leq w\log eK
\end{align*}
So, we get 
\begin{equation*}
\ex{\hat{t}_u} \leq \log eK + w \log eK \leq w\log eK 
\end{equation*}
So, we have the expected maximum appearance time of data packets at $K$ source modes as at most $w\log eK$ where $w = \frac{b}{\beta}$ for some constant $b > 1$. Now considering the worse-case analysis, the maximum time it takes for complete $l$ rounds of data packets to appear at all $K$ source nodes is,
\begin{equation}
\tappk \leq \ell \log eK\frac{b}{\beta}
\label{eq:tau_app}
\end{equation}

Note that after $\tappk$ time all nodes have $\ell$ arrivals. Let $\tcompk{\ell}$ be the expected function computation time of $\ell$ data rounds given the data packets have arrived at all source nodes. Now, we know because of the appearance time of rounds, random walks of data packets across the rounds are independent, so we can write $\tcompk{\ell} \leq \ell\tcompk{1}$.

Now, to find the single round function computation time $\tcompk{1}$ we perform the following analysis. Recall that the computation schema for function $f_K$ is given by a binary tree $\T$ with height $h$ where $\log K \leq h <K$. If $\T_i$ is a complete subtree of binary tree $\T$ with its root at level $i$, then let $\tau_{\T_i}$ be the coalescence time of $\T_i$. We define the coalescence time with respect to three vertices $x, y,z \in V$ as $ \tau_{(x,y;z)} = \max\{\tau_{(x,z)}, \tau_{(y,z)}\}$ where $\tau_{(a,b)}$ is the time taken for a random walk starting from $a$ to first hit $b$, so for $\T_2$ we have $\tau_{\T_2} = \tau_{(\phi(1,0), \phi(1,1); \phi(0,0))}$.

	In general, considering the worse case scenario we have 
	
	$\tau_{\T_i} = \tau_{(\phi(1,0), \phi(1,1); \phi(0,0))} + \max\{\tau_{\T_{i-1,0}}, \tau_{\T_{i-1,1}}\},$
	where $\T_{i-1,0}$ and $\T_{i-1,1}$ are the two subtrees of the root with height $i-1$.
	Applying the recursion again we get 
	$\tau_{\T_{i-1,0}} = \tau_{(\phi(2,0), \phi(2,1); \phi(1,0))} + \max\{\tau_{\T_{i-2,00}}, \tau_{\T_{i-2,01}}\},$
	and
	$\tau_{\T_{i-1,1}} = \tau_{(\phi(2,2), \phi(2,3); \phi(1,1))} + \max\{\tau_{\T_{i-2,10}}, \tau_{\T_{i-2,11}}\}.$
	Observing that for any four random variables $W, X, Y, Z$, $\max\{W + X, Y + Z\} \leq \max\{W, Y\} + \max\{X,Z\}$ we get
	\begin{align*}
	\tau_{\T_i} \leq& \tau_{(\phi(1,0), \phi(1,1); \phi(0,0))} + \max \{  \tau_{(\phi(2,0), \phi(2,1); \phi(1,0))}, \tau_{(\phi(2,2), \phi(2,3); \phi(1,1))}\}\\
	& + \max\{\tau_{\T_{i-2,00}}, \tau_{\T_{i-2,01}},\tau_{\T_{i-2,10}}, \tau_{\T_{i-2,11}}\}.
	\end{align*}
	
	Therefore we get 
	\begin{equation}
	\tau_{\T_h} \leq \sum_{i=1}^{h} \max_{0\leq j \leq 2^{i-1}-1}\{ \tau_{(\phi(i,2j), \phi(i,2j+1); \phi(i-1,j)) }\}.
	\label{eq:tau_k_rec}
	\end{equation}
	
	Note if we have a skewed binary tee schema with $h=K-1$ then in the above equation many variables of form $\tau_{(\cdot,\cdot)}$ will not be defined whereas for the complete binary tree schema with $h=\log K$ all such variables are present. Now, from the definition of coalescence time we know that each coalescence consists of two hitting time events. So, for each coalescence of form $\tau_{(x,y;z)}$ let us denote two random variables $\tau_{(x,y;z)}^1$ and $\tau_{(x,y;z)}^2$ for the two hitting time events. Let set $B$ be defined as $B:=\{\tau_{(\phi(i,2j), \phi(i,2j+1); \phi(i-1,j)) }^k:1\leq i\leq h, 0\leq j \leq 2^{i-1}-1 \text{ and } k=1,2\}$.  Note that there are always $2(K-1)$ random variables of form $\tau_{(x,y;z)}^k$ for $k=1,2$ on the right hand side of Eq.~\eqref{eq:tau_k_rec}. Now, let the hitting time for the $i^{th}$ out of $2(K-1)$ hitting event in the Markov chain associated with the simple random walk (with {\em no delay}) be $\tau_{(x,y;z)}^i$ for node $x,y,z \in V$. So, we have $\mathbb{E}[\tau_{(x,y;z)}^i] \leq \thits$ where $\thits=\max\limits_{x,y \in V} \mathbb{E}_x(\tau_y)$ is the worst-case hitting time of the simple random walk starting from any node of the graph. By Markov's inequality $\prob{\tau_{(x,y;z)}^i \geq e\thits} \leq \frac{1}{e}$. Now, consider the probability of the random walk not hitting $z$ in $we$ times the worst-case hitting time i.e. we consider $w~e\thits$ time and divide it into $w$ slots $(l=1,2,\cdot\cdot\cdot,w)$ of $e\thits$ each. By the Markov property of random walks, we know that the random walks in each of these slots are independent. Also, since we have used the worst case hitting time for bounding the probability of one slot using Markov's inequality, this bound will hold true for any starting vertex at the start of any slot $l$. So, we have $\prob{\tau_{(x,y;z)}^i\geq w~e\thits} \leq \prod_{l=1}^w \frac{1}{e}\leq \frac{1}{e^w}$. Now, let $\tau_{max}=\max_{\tau\in B}\tau_{(x,y;z)}^i$, then $\prob{\tau_{max} \geq w~e\thits} \leq 2(K-1)e^{-w}$.
	Now, we have
\begin{align*}
\ex{\tau_{max} - \log eK} &\leq \sum_{i=1}^{\infty} \prob{\tau_{max}-\log eK \geq i}\\
&\leq e\thits \sum_{j=1}^{\infty} \prob{\tau_{max}\geq (\log eK+j)e\thits}\\
&\leq e\thits \sum_{j=1}^{\infty} 2(K-1)e^{-\log eK} e^{-j}\\
&\leq 2\thits\frac{K-1}{K(e-1)}
\end{align*}
So, 
$$\ex{\tau_{max}}\leq \log eK + 2\thits\frac{K-1}{K(e-1)} \leq \thits \log eK$$ as both $\log eK>0$ and $\thits>0$.
So without any queueing delay, the expected time taken by the data packets involved in $2(K-1)$ coalescences (hitting events) to hit their designated nodes is $\thits \log eK$. This analysis is done assuming there is only one packet in the queue of any node at any time. 
	
  Now, we analyse the delay caused due to more than one packets in the queue. Consider a packet generated at node $p \in V_s$, let us call it $p$ after its generating node. Let $X_{p,1}^t=X_p^t$ denote the position of packet $p$ at time $t$, we drop the subscript denoting the round number as we are dealing with one round function computation.
Now consider for any node $u \in V$,
\begin{align}
&\prob{\mbox{Packet } p \mbox{ is delayed at } u \mbox{ at time }t} =\nonumber \\
& \sum_{w\geq2} \prob{\mbox{Packet } p \mbox{ is not picked at } u \mbox{ at time } t \cap Q_t^\beta(u)=w \mid X_p^t=u}\prob{X_p^t=u}
\label{eq:cond1}
\end{align}
We know, 
\begin{align}
&\prob{\mbox{Packet } p \mbox{ is not picked at } u \mbox{ at time } t \cap Q_t^\beta(u)=w \mid X_p^t=u}= \nonumber\\
&\frac{w-1}{w}\frac{\prob{Q_t^\beta(u)=w \cap X_p^t=u}}{\prob{X_p^t=u}}
\label{eq:cond2}
\end{align}
As, $\prob{A\cap B}\leq \prob{A}$, we can write $\prob{Q_t^\beta(u)=w \cap X_p^t=u}\leq\prob{Q_t^\beta(u)=w}$. Using this result in Eq.~\ref{eq:cond2} we get
\begin{align}
&\prob{\mbox{Packet } p \mbox{ is not picked at } u \mbox{ at time } t \cap Q_t^\beta(u)=w \mid X_p^t=u}\leq \nonumber\\
&\frac{w-1}{w}\frac{\prob{Q_t^\beta(u)=w}}{\prob{X_p^t=u}}\leq \frac{\prob{Q_t^\beta(u)=w}}{\prob{X_p^t=u}}
\label{eq:cond3}
\end{align}
Using Eq.~\ref{eq:cond3} in Eq.~\ref{eq:cond1}, we get
\begin{equation}
\prob{\mbox{Packet } p \mbox{ is delayed at } u \mbox{ at time }t} \leq \sum_{w\geq2} \prob{Q_t^\beta(u)=w}=\prob{Q_t^\beta(u)\geq2}
\label{eq:cond4}
\end{equation}
Now, we have
\begin{align}
&\prob{\mbox{Packet } p \mbox{ is delayed at time }t}=\nonumber\\
& \sum_{u \in V} \prob{\mbox{Packet } p \mbox{ is delayed at } u \mbox{ at time }t\mid X_p^t=u}\prob{X_p^t=u}
\label{eq:condsum}
\end{align}
Now, let $c_t(\beta)=\max_{u \in V} \prob{Q_t^\beta(u)\geq2}$ be the maximum delay probability at time $t$ over all nodes in set $V$ for given data rate $\beta$. So, at stationarity the maximum delay probability converges to $c(\beta)$ i.e., $c(\beta)=\lim_{t\rightarrow \infty}c_t(\beta)$. $c(\beta)$ is a continuous and increasing function of $\beta$ (see Claim~1 \cite{Gillani-arXiv:2017} for details) with $c(0)=0$ and $c(\beta) \rightarrow 1$ as $\beta \rightarrow \beta^*$ where $\beta^*$ is the critical rate below which data rates are stable and above which they are unstable. We know that any packet gets delayed at time slot $t$, due to queue at a node $u,$ because it is not picked for transmission in that slot among all the packets in the queue. Thus, the probability of a packet not being delayed by a node in a given time slot $t$ is $ 1 - c(\beta)$. So, combining the queueing delay with the non-delayed hitting time events, we have the expected number of steps taken by the packets involved in $2(K-1)$ coalescences (hitting events) to hit their designated nodes as $\frac{\thits \log eK}{1 - c(\beta)}$. This expected time is computed without considering any dependence among the hitting events. Now, we use this result in Eq.~\eqref{eq:tau_k_rec} to capture the worse case relation between different levels of binary tree schema and hence the associated hitting time events. So, we have $C(K) = \tau_{\T_h}\leq h \dfrac{\thits \log eK}{1 - c(\beta)}$.

So, we have one round function computation time i.e., assuming each source node has generated only one data packet as 
$\tcompk{1} = C(K) =  h \dfrac{\thits \log eK}{1 - c(\beta)}$, where $c(\beta)\in[0,1]$ is a continuous and increasing function of $\beta$ representing the maximum delay probability in the limit over all nodes in $V$, $\thits$ being the worst-case hitting time of simple random walk on $G$. So,
\begin{equation}
\tcompk{\ell} \leq \ell\tcompk{1} = \ell\left(h \dfrac{\thits \log eK}{1 - c(\beta)} \right).
\label{eq:clearnce_time_k}
\end{equation}
Recall, $\tcomp$  is the expected function computation time of $\ell$ rounds of data arrival and so, we can write 
\begin{equation}
\tcomp = \tappk + \tcompk{\ell}.
\label{eq:tau_col_k_initial}
\end{equation}
Now, using the results from Eq.s \eqref{eq:tau_app} and \eqref{eq:clearnce_time_k}, we have
\begin{equation}
\tcomp \leq \ell \log eK \dfrac{b}{\beta}+ \ell\left(h \dfrac{\thits \log eK}{1 - c(\beta)} \right)
\label{eq:tcol_k_actual}
\end{equation}
So, we have,
\begin{equation}
\tbarfk = \lim_{\ell\rightarrow \infty}\dfrac{\tcomp}{\ell} \leq \log eK \left(\dfrac{b}{\beta} + h \dfrac{\thits}{1 - c(\beta)}\right)
\label{eq:tcol_k}
\end{equation}
where $b>1$ is a constant and $\thits$ is the worst-case hitting time of simple random walk on $G$.
\end{proof}

For the flexible model computation time we have the following theorem proof.

\begin{proof}[Proof of Theorem \ref{thm:flexible_time}]
Now, for the flexible model again we are given a graph $G$ with source nodes in set $V_s$ receiving independent Bernoulli data arrivals at stable rate $\beta$. We will prove this theorem on same lines as that of fixed model computation time proof. We will first find the expected maximum time by which $\ell$ rounds of data arrival have happened and then after this time, we find the expected function computation time assuming all $K\ell$ packets have appeared given that $|V_s|=K$.  

Recall that $\tappk = \min_t\{ X_{u,j}^t \in V : 1\leq j \leq \ell, u\in V_s\}$ is the appearance time of $\ell$ rounds of data arrival on each node using \flexiblealgo\ algorithm for communication, where $X_{u,j}^t$ is the random variable denoting the position of $j^{th}$ round data packet of node $u$ at the start of time slot $t$. Using the same analysis as in the proof of Theorem~\ref{thm:fixed_time}, we get the expected time for complete $\ell$ data rounds to be generated at all $K$ source nodes as 
\begin{equation}
\tappk \leq \ell \log eK\frac{b}{\beta}
\label{eq:tau_app_2}
\end{equation}
where $b>1$ is a constant. Note that after $\tappk$ time all nodes have $\ell$ arrivals. Let $\tcompk{\ell}$ be the expected function computation time of $\ell$ data rounds given the data packets have arrived at all source nodes. Now, we know because of the appearance time of rounds, random walks of data packets across the rounds are independent, so we can write $\tcompk{\ell} \leq \ell\tcompk{1}$.

Now, we perform the following analysis to find the value of $\tcompk{1}$. Recall that the communication network is represented by graph $G=(V,E)$ with sources in set $V_s \subseteq V.$ The schema to compute function $f_K$ is a binary tree of height $h$ where $\log K \leq h <K$ thus has $h+1$ levels. We label the levels in the schema starting from the sink. Leaf level i.e. level $h$ simply acts as source of data for the function to be computed, hence nodes at this level perform identity function $\theta(i,j)=x_{j+1}$ for $0 \leq j < 2^i$. However, in case of a non-complete binary tree source nodes may be present at other levels as well, so network nodes in $V_s$ always perform identity function. Nodes other than the source nodes with id $\T(i,j)$ at all other levels i.e.  level $0 \leq i < h$ compute the function $\theta(i,j)=\theta(i+1, 2j) \oplus \theta(i+1, 2j+1)$, where $\oplus$ is the binary operation to be performed by node $\T(i,j)$ specified by function schema and $0 \leq j < 2^i$. This subfunction computation can be seen as the coalescence of the data packets of the operands. For example, in Figure~\ref{fig:func_schema}a, there are three levels with level $2$ providing four data operands, level $1$ performing two coalescences on the respective operands and level $0$ performing one coalescence. So, because of the in-network computation paradigm each level $i+1$ acts as source for the level $i$.
Now, we will first analyse the computation time for $(h-1)^{th}$ level and replicate it for other levels. If the time required to complete $j^{th}$ level is $C_{k_j}$ then the total time to complete the function computation is $C(K) = \sum_{j=0}^{h -1} C_{k_j} = O(h C_{k_{(h -1)}}).$ For the  simplicity of notations, we remove the subscript $j$ denoting the level number from various notations whenever it is clear from the context.
	
	\paragraph*{Multiple random walks to single random walk.}
	Given $k = 2^i$ data packets from a single level, $i$, of $\T$ let us consider the collection of random walks $\{ X_{i,j}(t) : 0 \leq j < k\}$ executed by these $k$ packets in $G$. We represent this collection as a single random walk on the product graph $Q_k = (V_{Q_k},E_{Q_k})$ where $V_Q = V^k.$ Consider the set 
	\begin{equation}
	S_k := \left\{(v_0, ... , v_{k-1}): v_{2l} = v_{2l+1}:  0 \leq l \leq \frac{k}{2}-1 \right\}. \label{eq:reductionset}
	\end{equation}
	Note that when the random walk on $Q_k$ visits any vertex $S_k$ it is equivalent to all the $k/2$ pairs of walks on $G$ simultaneously coalescing which is what we require for computing the collection of functions $\{ \theta(i,2l) \oplus \theta(i,2l+1) : 0 \leq l \leq 2^{i-1} -1\}$.
	
In other words, each node $\textbf{v} \in V_{Q_k}$ is a $k$-tuple of nodes $\{v_1,\ldots,v_k\}$ where each $v_i \in V.$ Thus, $k$ random walks $X_{u_i}(t)$ on graph $G$ starting from $u_i \in V_s$ can be replaced by a single random walk $X_{\textbf{u}}(t)$ on graph $Q_k$ with starting position $\textbf{u} = (u_1, u_2,\ldots, u_k)$. We further reduce the graph $Q_k$ to a graph $\Gamma_k$ by contracting the set $S_k$ to a single vertex $\gamma_k$ while retaining all other vertices, edges and loops. Thus, the degree of every vertex of $\Gamma_k$ is same as that of $Q_k$ except for the vertex $\gamma_k.$ The degree of $\gamma_k$ is the sum of degrees of all vertices of set $S_k$ and is given by $d_{\gamma_k}.$ Also if $\pi$ and $\hat{\pi}$ are the stationary distributions of a random walk on graphs $Q_k$ and $\Gamma_k$ respectively, then $\hat{\pi}_v = \pi_{v}$ for $v \notin S_k$ and $\hat{\pi}_{\gamma_k} = \pi_{S_k} = \sum_{x \in S_k} \pi_x$.
	
	\paragraph*{Hitting time from stationarity.}
		Recall that the probability of a random walk to move from vertex $u$ to $v$ in graph $G$ is given by $\probm{u,v}$ (defined by Eq.~\ref{eq:metropolis_final}) in our algorithm. Let $\pi$ be the stationary distribution of our random walk in $G$. Let $\mathcal{P}^t[u, \cdot]$ be the probability distribution of the Markov chain associated with the given random walk that begins at state $u$ at time $t$. Then, the distance of this distribution from its stationary distribution $\pi$ at time $t$ is defined as $d(t) := \max_{u\in V} \lvert|\mathcal{P}^t[u, \cdot] - \pi|\rvert_{TV}$. Then, for any $\epsilon > 0$ mixing time $\tmix^G$ of the random walk on $G$ with transition probability given by Eq.~\ref{eq:metropolis_final}, is given by
	\begin{equation}
	\tmix^G=\min\{t:d(t)\leq\epsilon\}.
	\label{eq:mixing_condition}
	\end{equation}
	This measures the time required by the
	Markov chain for the distance to the stationarity to be small. The mixing time for graphs $Q_k$ and $\Gamma_k$ are defined similarly and are denoted by $\tmix^{Q_k}$ and $\tmix^{\Gamma_k}$ respectively. We use the following relation among the mixing times of the three graphs in our proof.
	
	\begin{lemma} (\cite{Cooper-SIAM:2013}, Lemma~2) 
		\label{lm:mixing_time}
		Mixing time of simple random walk with transition probability given by Eq.~\ref{eq:metropolis_final} on graphs $G,Q_k,\Gamma_k$ are given as
		\begin{equation*}
		\tmix^G = O\Bigg(\frac {\log n}{1 - \lambda_2}\Bigg); \tmix^{Q_k} = O(k\tmix^G); \tmix^{\Gamma_k} = O(k\tmix^G), 
		\end{equation*}
		such that
		\begin{equation*}
		\underset{u \in V_F}{max}\lvert| \mathcal{P}^t[u, \cdot] - \pi|\rvert_{TV} \leq 1/{n_F^2} \text{, for any t } \geq \tmix^F
		\end{equation*}
		where $F$ can be any of the graphs $G,Q,\Gamma,$ $n_F = \lvert V_F\rvert$ and $1-\lambda_2$ is the spectral gap of the transition matrix of the simple random walk.
	\end{lemma}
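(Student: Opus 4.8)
The plan is to derive all three estimates from one standard spectral bound for reversible chains and then transport it across the product and contraction constructions. Recall the relaxation-time bound (see, e.g., \cite{Levin-BOOK:2009}): for a reversible irreducible chain with stationary distribution $\pi$ on a state space of size $N$, the total-variation distance satisfies $d(t)\le \lambda_\ast^{\,t}/(2\sqrt{\pi_{\min}})$, where $\lambda_\ast$ is the second-largest eigenvalue in absolute value and $\pi_{\min}=\min_x\pi(x)$; making the walk lazy (or assuming $G$ non-bipartite) lets us replace $\lambda_\ast$ by $\lambda_2$ at the cost of a factor $2$. Solving $d(t)\le\epsilon$ gives $\tmix(\epsilon)=O\!\big(\tfrac{1}{1-\lambda_2}(\log\tfrac1\epsilon+\log\tfrac1{\pi_{\min}})\big)$. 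Since the lemma measures mixing to accuracy $1/n_F^2$, I would throughout substitute $\epsilon=1/n_F^2$, which is exactly the bound promised in the ``such that'' clause.

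For $G$ itself this is immediate: $N=n$ and $\pi_v=\degree{v}/2m\ge 1/n^2$, so $\log(1/\pi_{\min})=O(\log n)$ and $\log(1/\epsilon)=O(\log n)$, yielding $\tmix^G=O(\log n/(1-\lambda_2))$. For $Q_k=G^{\otimes k}$ I would use that the walk on $Q_k$ is $k$ independent copies of the walk on $G$, so its eigenvalues are $k$-fold products of the eigenvalues of $G$; the largest product strictly below $1$ comes from placing $\lambda_2$ in one coordinate and $1$ in the rest, whence $\lambda_2^{Q_k}=\lambda_2$ and the gap is preserved. Here $N=n^k$ and $\pi_{\min}^{Q_k}=(\pi_{\min}^G)^k$, so both $\log(1/\epsilon)=2\log n_{Q_k}$ and $\log(1/\pi_{\min}^{Q_k})$ are $O(k\log n)$, and the bound collapses to $\tmix^{Q_k}=O(k\log n/(1-\lambda_2))=O(k\,\tmix^G)$.

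The one genuinely delicate step is $\Gamma_k$, obtained from $Q_k$ by contracting $S_k$ to the single vertex $\gamma_k$, since contraction is \emph{not} a lumping of the chain in the usual sense and one must argue that the gap survives. The key is the variational characterisation $1-\lambda_2=\min_f \mathcal{E}(f,f)/\operatorname{Var}_\pi(f)$: the functions on $\Gamma_k$ are precisely the functions on $Q_k$ that are constant on $S_k$, and for such $f$ both the Dirichlet form $\mathcal{E}(f,f)$ (the edges internal to $S_k$ contribute nothing, as $f$ is constant there, while the boundary edges are retained) and the variance are unchanged. Restricting the Rayleigh quotient to this subspace can only raise its minimum, so $1-\lambda_2^{\Gamma_k}\ge 1-\lambda_2^{Q_k}=1-\lambda_2$. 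Combined with $n_{\Gamma_k}\le n^k$ and, from the stationary-distribution identities already established ($\hat\pi_v=\pi_v$, $\hat\pi_{\gamma_k}=\pi_{S_k}$), the estimate $\hat\pi_{\min}\ge\pi_{\min}^{Q_k}=(\pi_{\min}^G)^k$, the same substitution $\epsilon=1/n_{\Gamma_k}^2$ gives $\tmix^{\Gamma_k}=O(k\log n/(1-\lambda_2))=O(k\,\tmix^G)$.

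I expect the main obstacle to be making this contraction argument airtight: one must verify that the contracted walk remains reversible with the stated stationary measure $\hat\pi$, and that after contraction it is $\lambda_\ast$ rather than merely $\lambda_2$ that is controlled, since the variational bound above governs the top of the spectrum but not a possible $\lambda_N$ near $-1$ (which is precisely why the laziness assumption is convenient here). Once the spectral bound and the laziness convention are fixed, the single-graph and product-graph parts are routine, and only the gap-preservation under contraction requires care.
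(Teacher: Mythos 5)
Your proof is correct, but it takes a genuinely different route from the paper, whose entire proof of this lemma consists of two citations: the bound for $G$ is sent to Theorem~12.3 of Levin--Peres and the bounds for $Q_k$ and $\Gamma_k$ are sent wholesale to the proof of Lemma~2 of Cooper et al. You reconstruct the content that those citations hide: the relaxation-time bound combined with $\pi_{\min}\ge 1/n^2$ for $G$; the tensor-product spectrum of $P^{\otimes k}$ for $Q_k$, where the gap is preserved and both $\log(1/\pi_{\min})$ and $\log(n_{Q_k}^2)$ are $O(k\log n)$; and, for $\Gamma_k$, the observation that functions on the contracted graph are exactly the functions on $Q_k$ constant on $S_k$, that the Dirichlet form and variance are preserved under this lift, and hence that minimizing the Rayleigh quotient over the smaller class can only raise it, so the spectral gap survives contraction. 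That last step is the real mathematical content, and your argument for it is sound (the walk on the contracted multigraph is reversible with exactly the stationary measure $\hat\pi$ the paper records). What each approach buys: the paper's citation is short and defers correctness to Cooper et al.; yours is verifiable in place and surfaces a hypothesis the paper silently elides --- for the non-lazy walk of Eq.~\eqref{eq:metropolis_final} on a bipartite $G$ the chain is periodic and never mixes even though $1-\lambda_2$ may be bounded away from zero, so some laziness/aperiodicity convention (or reading $\lambda_2$ as the second-largest eigenvalue in absolute value) is needed for the lemma to be literally true; this caveat afflicts the paper's statement as much as your proof. One refinement: your worry that the contraction argument controls only the top of the spectrum can be settled without laziness, since the same subspace-restriction argument applied to the form $\langle (I+P)f,f\rangle_\pi$ shows that the smallest eigenvalue also rises under contraction, so the absolute spectral gap is preserved as well.
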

	
	\begin{proof}
		The bound on mixing time of $G$ directly follows from Theorem~12.3 of \cite{Levin-BOOK:2009}. Relation between the mixing times for $Q_k$ and $\Gamma_k$ with that of $G$ follow directly from the proof of Lemma~2 of \cite{Cooper-SIAM:2013}.
	\end{proof}
	
	Let $\mathbb{E}_{\pi}(\tau_v)$ be the expected hitting time of a vertex $v$ starting from the stationary distribution $\pi.$ From Proposition~10.19 of \cite{Levin-BOOK:2009}, we can write
	\begin{equation}
	E_{\pi}(\tau_v) = Z_{vv} / \pi_v,
	\label{eq:expected_hitting}
	\end{equation}
	where,
	\begin{equation}
	Z_{vv} = \sum_{t=0}^{\infty} (\mathcal{P}^t[v,v] - \pi_v).
	\label{eq:z_vv}
	\end{equation}

	Let $M_k(\textbf{u})$ be the time until the first two random walks meet in $Q_k$ when they start from $\textbf{u}.$ In other words, $M_k(\textbf{u})$ is the time to reach the vertex $\gamma_k$ in graph $\Gamma.$ Now, if mixing time $\tmix^{\Gamma}$ of graph $\Gamma$ satisfies Eq.~\ref{eq:mixing_condition}, then
	\begin{equation}
	\mathbb{E}(M_k(\textbf{u})) \leq \tmix^{\Gamma} + (1 + o(1)) \mathbb{E}_{\pi}(\tau_{\gamma_K})
	\label{eq:first_meetingtime}
	\end{equation}
	where $\mathbb{E}_{\pi}(\tau_{\gamma_k})$ is the hitting time of vertex $\gamma_k$ from stationary distribution $\pi$ in graph $\Gamma$. The following lemma follows directly from Lemma~3 of \cite{Cooper-SIAM:2013}.
	
	\begin{lemma}
		\label{lm:bound_zvv}
		Let the spectral gap of the transition matrix $\mathcal{P}[\cdot,\cdot]$ of simple random walk on graph $F$ be $1-\lambda_2$, then for any vertex $v$ of graph $F,$ we have
		\begin{equation*}
		Z_{vv} \leq \frac{1}{1-\lambda_2}
		\end{equation*}
		where $Z_{vv}$ is defined in Eq.~\ref{eq:z_vv}. This lemma holds for all three graphs $G,Q_k$ and $\Gamma_k$. 
	\end{lemma}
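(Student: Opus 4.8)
The plan is to prove the bound by spectral decomposition of the reversible transition matrix $\mathcal{P}$ of simple random walk on $F$. Since $\probm{u,v} = 1/\degree{u}$ for $v \in \nbd{u}$ is reversible with respect to $\pi_v = \degree{v}/(2m)$, the operator $\mathcal{P}$ is self-adjoint in the inner product $\langle f,g\rangle_\pi = \sum_x \pi_x f(x)g(x)$, so it admits real eigenvalues $1 = \lambda_1 > \lambda_2 \geq \cdots \geq \lambda_{n_F} > -1$ together with a family $f_1 \equiv 1, f_2, \ldots, f_{n_F}$ of eigenfunctions orthonormal in $\langle\cdot,\cdot\rangle_\pi$. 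The first step is to record the standard spectral identity $\mathcal{P}^t[v,v] = \pi_v \sum_{i=1}^{n_F} \lambda_i^t f_i(v)^2$; isolating the $i=1$ term, which contributes exactly $\pi_v$, gives $\mathcal{P}^t[v,v] - \pi_v = \pi_v \sum_{i\geq 2} \lambda_i^t f_i(v)^2$.

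Next I would sum this identity over $t$. Because $|\lambda_i| < 1$ for every $i \geq 2$, each geometric series converges with $\sum_{t=0}^\infty \lambda_i^t = 1/(1-\lambda_i)$, so substituting into the definition of $Z_{vv}$ in Eq.~\ref{eq:z_vv} yields
\[
Z_{vv} = \pi_v \sum_{i=2}^{n_F} \frac{f_i(v)^2}{1-\lambda_i}.
\]
Since the map $\lambda \mapsto 1/(1-\lambda)$ is increasing on $(-1,1)$ and $\lambda_i \leq \lambda_2$ for all $i \geq 2$ by the ordering of the spectrum, every coefficient is at most $1/(1-\lambda_2)$, giving $Z_{vv} \leq \frac{\pi_v}{1-\lambda_2}\sum_{i=2}^{n_F} f_i(v)^2$.

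The final step is to control the residual sum $\sum_{i \geq 2} f_i(v)^2$. Evaluating the spectral identity at $t=0$, where $\mathcal{P}^0 = I$ and hence $\mathcal{P}^0[v,v]=1$, produces the completeness relation $\pi_v \sum_{i=1}^{n_F} f_i(v)^2 = 1$; since $f_1 \equiv 1$ this gives $\sum_{i=2}^{n_F} f_i(v)^2 = 1/\pi_v - 1$. Substituting,
\[
Z_{vv} \leq \frac{\pi_v}{1-\lambda_2}\left(\frac{1}{\pi_v} - 1\right) = \frac{1-\pi_v}{1-\lambda_2} \leq \frac{1}{1-\lambda_2},
\]
which is the claimed bound, and it applies uniformly to $G$, $Q_k$ and $\Gamma_k$ since only reversibility and the spectral gap entered the argument. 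The main obstacle is the convergence/aperiodicity point flagged above: on a bipartite graph the simple random walk has $-1$ in its spectrum, so $\sum_t \mathcal{P}^t[v,v]$ does not converge termwise and the geometric-sum step breaks. The coalescence-contraction construction sidesteps this for the product graphs $Q_k$ and $\Gamma_k$, which carry self-loops and are therefore aperiodic, and for $G$ it must be guaranteed by the aperiodicity convention already implicit in the mixing-time statements (Lemma~\ref{lm:mixing_time}) that the proof relies on. Everything else reduces to the routine eigenvalue estimate above.
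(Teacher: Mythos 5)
Your proof is correct, and it takes a route the paper never spells out: the paper offers no argument for this lemma at all, simply asserting that it ``follows directly from Lemma~3 of \cite{Cooper-SIAM:2013}''. Your spectral reconstruction --- diagonalising the reversible kernel $\mathcal{P}$ in the $\pi$-weighted inner product, isolating the $i=1$ term, summing the geometric series to get $Z_{vv}=\pi_v\sum_{i\ge 2} f_i(v)^2/(1-\lambda_i)$, and closing with the completeness relation $\sum_{i\ge 2} f_i(v)^2 = 1/\pi_v-1$ --- is essentially the argument behind the cited result, so nothing is lost, and transparency is gained. In particular your version makes explicit two things the bare citation hides: first, the bound is really the slightly stronger $Z_{vv}\le (1-\pi_v)/(1-\lambda_2)$; second, negative eigenvalues cost nothing because $\lambda\mapsto 1/(1-\lambda)$ is increasing on $(-1,1)$, so the bound legitimately involves $\lambda_2$ rather than the largest eigenvalue modulus. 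Your aperiodicity caveat is also genuine, not pedantic: for the non-lazy walk of Eq.~\ref{eq:metropolis_final} on a bipartite $G$ (cycle with even $n$, grid, hypercube, torus --- all of which appear in the paper's examples), $-1$ is an eigenvalue and the series defining $Z_{vv}$ in Eq.~\ref{eq:z_vv} does not converge, so the lemma as stated needs a non-bipartiteness or laziness assumption that neither the paper nor its one-line citation makes explicit. One small correction to your closing remark, though: the product graph $Q_k$ does \emph{not} automatically carry self-loops. An edge of $Q_k$ moves every coordinate along an edge of $G$, so if $G$ is bipartite the product chain inherits the periodicity (and for even $k$ it even splits into non-communicating classes). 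The self-loops arise only in $\Gamma_k$, at the contracted vertex $\gamma_k$, from edges internal to $S_k$; so of the three graphs, only $\Gamma_k$ gets aperiodicity for free from the construction, and for $G$ and $Q_k$ the convergence of $Z_{vv}$ still rests on the unstated non-bipartiteness convention you identified.
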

	
	Now we prove the following Lemma which will be used to prove the time for single level coalescence.
	
	\begin{lemma}
		\label{lm:probmeasure_gamma}
		Let $G$ be a connected graph with $n$ vertices and $m$ edges and let
		\begin{equation}
		l^* = \text{max}\bigg\{2, \text{min}\bigg\{\frac{n}{\nu}, \log {K}\bigg\}\bigg\},
		\label{eq:k_star}
		\end{equation}
		where  $\nu =\sum_{v \in V}{(\degree{v})^2}/{d^2n}$. Let $k$ be an integer, $2 \leq k \leq l^*$ and $\gamma_k$ be the vertex in graph $\Gamma$ representing the contracted set $S_k.$ Then, there exists a constant $c_k > 0$ such that,
		\begin{equation*}
		\pi_{\gamma_k} = \frac{\degree{\gamma_k}}{(2m)^k} \geq \frac{c_k k\nu}{n}.
		\end{equation*}
	\end{lemma}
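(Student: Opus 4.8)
The plan is to compute $\pi_{\gamma_k}$ exactly as a stationary measure and then bound it below by a second–order inclusion–exclusion estimate, using the range restriction $k\le l^*$ only to control the error term. First I would record that the walk on $Q_k$ is simple, so its stationary distribution is the product measure $\pi_{\mathbf v}=\prod_i\degree{v_i}/(2m)$, and contracting $S_k$ to $\gamma_k$ gives $\degree{\gamma_k}=\sum_{\mathbf x\in S_k}\prod_i\degree{x_i}$. Since $\sum_{\mathbf v\in V_{Q_k}}\prod_i\degree{v_i}=(\sum_v\degree{v})^k=(2m)^k$, this yields the stated identity $\pi_{\gamma_k}=\degree{\gamma_k}/(2m)^k=\sum_{\mathbf x\in S_k}\pi_{\mathbf x}$; that is, $\pi_{\gamma_k}$ is exactly the product-stationary measure of the event $S_k$. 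Writing $A_l=\{v_{2l}=v_{2l+1}\}$ for $0\le l\le k/2-1$, I read the contracted set as the \emph{first-meeting} event $S_k=\bigcup_l A_l$ (consistent with $\gamma_k$ and $M_k$ being defined as the first time two walks meet), so $\pi_{\gamma_k}=\prob{\bigcup_l A_l}$ under $\pi$. A one-line computation gives the per-pair probability $p:=\prob{A_l}=\sum_v\pi_v^2=\sum_v\degree{v}^2/(2m)^2=\nu/n$, using $2m=dn$ and the definition of $\nu$.

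Next I would exploit that the pairs $\{2l,2l+1\}$ occupy disjoint coordinates, so under the product stationary measure the events $A_0,\dots,A_{k/2-1}$ are independent, each of probability $p$. Hence $\pi_{\gamma_k}=1-(1-p)^{k/2}$. Applying the second Bonferroni inequality to this union gives $\pi_{\gamma_k}\ge \tfrac{k}{2}p-\binom{k/2}{2}p^2$, where the linear term $\tfrac{k}{2}p=\tfrac{k}{2}\cdot\tfrac{\nu}{n}$ is already of the desired order and the quadratic term is the correction that must be dominated.

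Finally I would invoke $2\le k\le l^*$. For $k=2$ the binomial coefficient vanishes and $\pi_{\gamma_2}=p=\tfrac{k}{2}p$, so any $c_2\le 1/2$ works. For even $k\ge 4$ we have $k>2$, which forces $l^*=\min\{n/\nu,\log K\}\le n/\nu$ and hence $k\le n/\nu$, i.e. $kp=k\nu/n\le 1$. Then $\binom{k/2}{2}p^2\le \tfrac12(k/2)^2p^2=\tfrac{(kp)^2}{8}\le\tfrac{kp}{8}$, so $\pi_{\gamma_k}\ge \tfrac{kp}{2}-\tfrac{kp}{8}=\tfrac{3}{8}\cdot\tfrac{k\nu}{n}$, and the uniform choice $c_k=3/8$ suffices.

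The main obstacle is conceptual rather than computational and is two-fold. First, one must pin down that $S_k$ has to be read as ``some sibling pair has coincided'' (the first-meeting event defining $\gamma_k$ and $M_k$): the alternative ``all $k/2$ pairs coincide simultaneously'' reading would give $\pi_{\gamma_k}=(\nu/n)^{k/2}$, which makes the claimed bound \emph{false} for every $k>2$, so fixing the intended meaning is the crux. Second, one must verify that the range restriction $k\le l^*$ is precisely the condition $k\nu\le n$ needed to absorb the quadratic Bonferroni correction into a constant fraction of the linear term; the role of the $\log K$ and the outer $\max\{2,\cdot\}$ in $l^*$ is merely to keep $k$ within the regime $kp\le 1$. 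Once the first-meeting interpretation and the independence of the disjoint pair-events are in place, the estimate is immediate.
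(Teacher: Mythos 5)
Your proof is correct and takes essentially the same route as the paper's: the paper also reads $S_k$ as the union of the sibling-pair coincidence sets (writing $S_k = \bigcup S_{(x,y)}$ over pairs $y=x+1$, $x$ even), computes $\degree{S_{(x,y)}} = (2m)^k\nu/n$ and pairwise intersections $(2m)^k\nu^2/n^2$, and applies the same second-order inclusion--exclusion bound, using $k \le l^* \le n/\nu$ to absorb the quadratic correction, so your explicit $c_k = 3/8$ matches the constant implicit in the paper's step $\frac{k}{2}(2m)^k\frac{\nu}{n}\left(1-\frac{k\nu}{4n}\right) \ge \frac{3}{4}\cdot\frac{k}{2}(2m)^k\frac{\nu}{n}$. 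Your two additions are harmless refinements: the observation that coordinate-disjointness makes the pair events exactly independent (giving $\pi_{\gamma_k} = 1-(1-\nu/n)^{k/2}$, which the paper does not use), and the explicit identification of the first-meeting reading of $S_k$ as the only one under which the lemma is true --- which is indeed the reading the paper's proof adopts, despite the stricter ``all pairs simultaneously coalesce'' phrasing in the surrounding text.
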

	
	\begin{proof}
		Recall the definition of the set $S_k$ which is as follows:
		\begin{displaymath}
		S_k := \left\{(v_0, ... , v_{k-1}): v_{2\ell} = v_{2\ell+1}:  0 \leq \ell \leq \frac{k}{2}-1 \right\}.
		\end{displaymath}
		If $k = 2,$ then, the degree of set $S_k$ is:
		\begin{equation*}
		\degree{S_k} = \sum_{v \in V} (\degree{v})^2 = (2m)^2 \frac{\nu}{n}.
		\end{equation*}
		If $3 \leq k \leq l^*$ and for any $x,y$ such that $0 \leq x < y < k$, we define subsets of $S_k$ as:
		\begin{equation*}
		S_{(x,y)} = \left\{(v_0, ... ,v_{k-1}): v_x = v_y ; y = x + 1, x = 2\ell , 0 \leq \ell \leq \frac{k}{2}-1\right\}.
		\end{equation*}
		We get, $S_k = \underset{0 \leq x < y < k}{\cup}  S_{(x,y)}$, where $\degree{S_{(x,y)}} = (2m)^{k-2}\sum_{v\in V}(\degree{v})^2 = (2m)^k\frac{\nu}{n}$.
		
		By the definition of our function $f_K,$ for $\{x,y\}\neq\{p,q\}$, $\{x,y\} \cap\{p,q\} = \phi$. So, $\degree{S_{(x,y)}} \cap S_{(p,q)}) = (2m)^{k-4}\sum_{u,v\in V}(\degree{u})^2(\degree{v})^2$. 
		Thus by inclusion-exclusion principle,
		\begin{eqnarray} 
		\degree{\gamma_k} = \degree{S_k} & \geq & \underset{\{x,y\}}{\sum}\degree{S_{(x,y)}}-\underset{\{x,y\}\neq\{p,q\}}{\sum}\degree{S_{(x,y)}} \cap S_{(p,q)})\nonumber \\ 
		& \geq & \dfrac{k}{2}(2m)^k\dfrac{\nu}{n} - \dbinom{k/2}{2}(2m)^k\dfrac{\nu^2}{n^2} \label{eq:degree_1}\\
		& \geq & \dfrac{k}{2} (2m)^k \dfrac{\nu}{n}\bigg(1 - \dfrac{k\nu}{4n}\bigg)\label{eq:degree_2}\\
		& \geq & \dfrac{k}{2} (2m)^k \dfrac{\nu}{n} (1 - o(1))\label{eq:degree_3} \\
		& \geq & (2m)^k c_k\dfrac{k\nu}{n}.
        \label{eq:degree_4} 
		\end{eqnarray}

		The factor of $k/2$ in Eq.~\ref{eq:degree_1} is the result of fixed combinations among source nodes because of restricted coalescence defined by function and the combinatorial factor in other term is because of disjoint nature of nodes i.e. $\{x,y\} \cap\{p,q\} = \phi$. The bound in Eq.~\ref{eq:degree_3} follows from Eq.~\ref{eq:degree_2}, by using upper bound on $k$ from Eq.~\ref{eq:k_star}. Then, using bound on $\degree{\gamma_k}$ from Eq.~\ref{eq:degree_4} for $\pi_{\gamma_k}$, we get the desired result.
	\end{proof}
	
	Recall that $\mathbb{E}_{\pi}(\tau_{\gamma_k})$ is the expected time to hit the vertex $\gamma_k$ from stationary distribution $\pi.$ Then by Lemma~\ref{lm:bound_zvv} and Lemma~\ref{lm:probmeasure_gamma} we get,
	\begin{equation}
	\mathbb{E}_{\pi}(\tau_{\gamma_k}) \leq \dfrac{n}{c_k k \nu} \dfrac{1}{1 - \lambda_2}.
	\label{eq:hitting_tau}
	\end{equation}
	
	\paragraph*{Computation time without queueing delay for the simple random walk.}Recall that $M_k(\textbf{u})$ is the time of coalescences of $k \leq l^*$ random walks in $G$ i.e, the time to hit $\gamma_k$ in graph $\Gamma$ which is given by Eq.~\ref{eq:first_meetingtime}. By Lemma~\ref{lm:mixing_time} and Eq.~\ref{eq:hitting_tau} we get,
	
	\begin{eqnarray}
	\mathbb{E}[M_k(\textbf{u})] & \leq & \text{O}(k\tmix^G) + (1 + o(1))\mathbb{E}_{\pi}(\tau_{\gamma_k})\label{eq:firstmeeting_1}\\
	& = & \text{O}\Bigg(\dfrac{1}{1 - \lambda_2}\bigg(k \log{n} + \dfrac{n}{\nu k}\bigg) \Bigg).
    \label{eq:firstmeeting_2}
	\end{eqnarray}
	Eq.~\ref{eq:firstmeeting_2} gives the bound for expected first coalescence time among $k \leq l^*$ data packets, indicating partial computation of $(h -1)^{th}$ level of function schema. Now, for complete computation of $(h -1)^{th}$ level, we will prove that with probability $1 - 1/n^c$, where $c > 0$ is some constant there cannot be a subset of $k = l^*$ data packets which did not coalesce by time $t^*,$ where  $t^* = l^* \log {K}(\tmix^\Gamma + 3 E_{\pi}(\tau_{\gamma_k})).$  To prove this we need the following result.
	
	\begin{lemma}(\cite{Cooper-SIAM:2013}, Lemma 1) 
		\label{lm:prob_notvisit}
		The probability of the event $N^t(u,v)$ such that a random walk starting from $u$ does not visit vertex $v$ in first $t$ time steps is given by
		\begin{equation*}
		\mathbb{P}(N^t(u,v)) \leq e^{-\lfloor t/(T + 3 \mathbb{E}_{\pi}(\tau_v))\rfloor}.
		\end{equation*}
		Here $T = \tmix^G$ is the mixing time of the simple random walk on $G$ and $\mathbb{E}_{\pi}(\tau_v)$ is the expected hitting time of vertex $v$ from stationary distribution $\pi$.
	\end{lemma}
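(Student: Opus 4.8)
The plan is to prove this by a block (regeneration) argument. I would partition the time interval $[0,t]$ into consecutive blocks of length $L := T + 3\mathbb{E}_{\pi}(\tau_v)$ and show that, \emph{uniformly over the vertex at which a block begins}, the walk fails to visit $v$ during that block with probability at most $1/e$. Writing $N = \lfloor t/L\rfloor$ for the number of whole blocks inside $[0,t]$ and $F_i$ for the event that $v$ is not visited during block $i$, we have $\{\tau_v > NL\} = \bigcap_{i=1}^N F_i$, and by the Markov property $\mathbb{P}(F_i \mid \mathcal{F}_{(i-1)L}) = \mathbb{P}_{W_{i-1}}(\tau_v > L) \leq \max_{w} \mathbb{P}_w(\tau_v > L)$, where $W_{i-1}$ is the position at time $(i-1)L$. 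Since the per-block bound holds from every starting vertex, multiplying these conditional probabilities gives $\mathbb{P}(N^t(u,v)) \leq \mathbb{P}_u(\tau_v > NL) \leq (1/e)^{N} = e^{-\lfloor t/(T+3\mathbb{E}_{\pi}(\tau_v))\rfloor}$, which is exactly the claim.

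The heart of the argument is therefore the uniform per-block bound $\mathbb{P}_w(\tau_v > L) \leq 1/e$ for every $w \in V$. I would use the first $T = \tmix^G$ steps of the block merely to drive the walk close to stationarity: by Lemma~\ref{lm:mixing_time}, the law $\mathcal{P}^T[w,\cdot]$ of the walk after this phase satisfies $\lvert|\mathcal{P}^T[w,\cdot] - \pi|\rvert_{TV} \leq 1/n^2$ from any starting vertex $w$. Since failing to visit $v$ throughout the block implies failing to visit it during the remaining $3\mathbb{E}_{\pi}(\tau_v)$ steps, I would condition on the intermediate state $X_T$ and apply the Markov property to bound the block-failure probability by $\mathbb{E}_{\mathcal{P}^T[w,\cdot]}\!\left[g\right]$, where $g(z) := \mathbb{P}_z(\tau_v > 3\mathbb{E}_{\pi}(\tau_v)) \in [0,1]$ is the non-hitting probability of the post-mixing window viewed as a function of the intermediate state.

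The final step compares this expectation under $\mathcal{P}^T[w,\cdot]$ with the same expectation under $\pi$. Because $g$ is $[0,1]$-valued, the total-variation bound yields $\bigl|\mathbb{E}_{\mathcal{P}^T[w,\cdot]}[g] - \mathbb{E}_{\pi}[g]\bigr| \leq \lvert|\mathcal{P}^T[w,\cdot] - \pi|\rvert_{TV} \leq 1/n^2$, while Markov's inequality applied to the stationary hitting time gives $\mathbb{E}_{\pi}[g] = \mathbb{P}_{\pi}(\tau_v > 3\mathbb{E}_{\pi}(\tau_v)) \leq 1/3$. Hence the per-block failure probability is at most $1/3 + 1/n^2$, which is below $1/e$ for all but the smallest graphs; indeed the factor $3$ in the definition of $L$ and the target constant $1/e$ are chosen precisely so that $1/3 + o(1) \leq 1/e \approx 0.368$.

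I expect the main obstacle to be the approximate-stationarity bookkeeping rather than any single hard estimate. One must take care that the mixing correction is applied to ``no visit during the post-mixing window'' as a genuine $[0,1]$-valued observable of the intermediate state $X_T$ (so that the total-variation bound legitimately applies), and that the resulting per-block bound is uniform over the starting vertex $w$ — it is exactly this uniformity that justifies the block-by-block multiplication when conditioning on the past. These are the delicate points of the argument, and since this is Lemma~1 of \cite{Cooper-SIAM:2013}, I would follow their reasoning for the precise handling of the mixing and Markov-inequality estimates.
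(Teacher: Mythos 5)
The paper itself offers no proof of this lemma: it is imported verbatim as Lemma~1 of \cite{Cooper-SIAM:2013}, so there is no internal argument to compare against. Your block decomposition is correct and is essentially the standard proof of that cited result: use the first $T$ steps of each block to mix (uniformly in the block's starting vertex, paying $1/n^2$ in total variation against a $[0,1]$-valued observable), bound the stationary non-hitting probability by $1/3$ via Markov's inequality, and multiply the per-block bound $1/3 + 1/n^2 \leq 1/e$ across blocks using the Markov property. The only caveat is the one you already flag: the final numerical comparison needs $1/n^2 \leq 1/e - 1/3$, i.e.\ $n \geq 6$, which is the same implicit asymptotic regime in which \cite{Cooper-SIAM:2013} state their lemma.
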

	
	Let $\mathcal{N}(k,\textbf{v})$ be the set of data packets starting from vertex $\textbf{v} = (v_1,\cdots,v_k).$ There are two cases for coalescences of these packets. Either these data packets have coalesced during mixing time $\tmix^\Gamma$, or they have not. For latter case, we can use Lemma~\ref{lm:prob_notvisit} on graph $\Gamma_k$ with vertex $\gamma_k$ and $t =t^*$. This gives us the probability that the data packets have not coalesced by time $t$ which is same as the probability of random walk $\hat{X}_{\textbf{v}}$ on graph $\Gamma_k$ not hitting vertex $\gamma_k$ by time $t.$ So, using Lemma~\ref{lm:prob_notvisit},  $\mathbb{P}(\mbox{No meeting among }\mathcal{N}(k,\textbf{v}) \mbox{ before }t^*)$ is:
	\begin{equation*}
	\mathbb{P}(N^t(\textbf{v},\gamma_k)) \leq  e^{-l^* \log{K}} = K^{-l^*}.
	\end{equation*}
	So,
	\begin{equation*}
	\mathbb{P}(\exists \text{subset of $l^*$ packets which did not coalesce by $t^*$}) \leq \binom{K}{l^*}K^{-l^*} \leq \dfrac{1}{2}.
	\end{equation*}
	The bound of earlier equation is achieved from value of $l^*$ in Eq.~\ref{eq:k_star} and upper bound on binomial coefficient. So, expected number of steps until fewer than $l^*$ data packets remain is at most $ t^* + \frac{1}{2}(2t^*) + \frac{1}{4}(3t^*) + ... = 4t^*$. This is because after every $t^*$ step, number of data packets reduces by half due to coalescing. Now combining the above mentioned result with Eq.~\ref{eq:firstmeeting_2}, we can get the coalescence time for $(h -1)^{th}$ level as,
	\begin{eqnarray*}
		C_{k_{h -1}}^{\scriptsize\mbox{no delay}} &\leq& 4t^* + \mathbb{E}(M_{l^*})\\
		&=& O\Bigg(\dfrac{1}{1 - \lambda_2}\bigg((l^*)^2\log{K}\log{n}+\dfrac{n}{\nu}\log{K}+l^*\log{n}+\dfrac{n}{\nu l^*}\bigg)\Bigg) \\
		&=&  O\Bigg(\dfrac{1}{1 - \lambda_2}\bigg((l^*)^2\log{K}\log{n}+\dfrac{n}{\nu}\log{K}\bigg)\Bigg).
	\end{eqnarray*}
	Now, for $l^* = \log{K}$
	\begin{equation}
	C_{k_{h -1}}^{\scriptsize\mbox{no delay}}= O\Bigg(\dfrac{1}{1 - \lambda_2}\bigg(\log^{3}{K}\log{n}+\dfrac{n}{\nu}\log{K}\bigg)\Bigg).
	\label{eq:firstlevel_coal}
	\end{equation}
	Bound of Eq.~\ref{eq:firstlevel_coal} holds even if $l^* < \log{K}$. Note that this analysis holds true for coalesced packets as well, as after coalescing they are treated in similar way as data packets which originated from source set $V_s$. So, repeating similar analysis for all $h$ levels we get total function computation or coalescence time without any queueing delay as:
	\begin{eqnarray}
	C(K)^{\scriptsize\mbox{no delay}} & = & O (h C_{k_{h -1}}^{\scriptsize\mbox{no delay}})\nonumber \\
	& = & O\Bigg(\dfrac{h}{1 - \lambda_2}\bigg(\log^{3}{K}\log{n}+\dfrac{n}{\nu}\log{K}\bigg)\Bigg).
	\label{eq:coaltime_eigen}
	\end{eqnarray}
	By Lemma~\ref{lm:mixing_time}, the total coalescence time in terms of the mixing time is given by:
	\begin{eqnarray}
	C(K)^{\scriptsize\mbox{no delay}} = O\Bigg(h\tmix^G\bigg(\log^{3}{K}+\dfrac{n}{\nu}\frac{\log{K}}{\log n}\bigg)\Bigg).
	\label{eq:coaltime_mixing}
	\end{eqnarray}
	
	Eq.~\ref{eq:coaltime_eigen} and Eq.~\ref{eq:coaltime_mixing} give the one round function computation time assuming only one packet inside each queue. Now, we will find the actual function computation time including queueing delays. 
	
	\paragraph*{Incorporating queueing delay.}
		We will use the same queueing analysis as used in the proof of Theorem~\ref{thm:fixed_time} to get a bound on the probability of a packet being delayed by a node at time $t$. So, we have $\prob{\mbox{Packet } p \mbox{ is delayed at } u \mbox{ at time }t} \leq \sum_{w\geq2} \prob{Q_t^\beta(u)=w}=\prob{Q_t^\beta(u)\geq2}$ and we know, $$\prob{\mbox{Packet } p \mbox{ is delayed at time }t}= \sum_{u \in V} \prob{\mbox{Packet } p \mbox{ is delayed at } u \mbox{ at time }t\mid X_p^t=u}\prob{X_p^t=u}.$$ So, let $c_t(\beta)=\max_{u \in V} \prob{Q_t^\beta(u)\geq2}$ be the maximum delay probability at time $t$ over all nodes in set $V$ for given data rate $\beta$. Then, at stationarity we can say that the maximum delay probability converges to $c(\beta)$ where $c(\beta)=\lim_{t\rightarrow \infty}c_t(\beta)$ and is a continuous and increasing function of $\beta$ (see Claim~1 \cite{Gillani-arXiv:2017} for details). Thus, the probability of a packet not being delayed by a node in a given time slot $t$ is $ 1 - c(\beta)$. So, combining the queueing delay with the non-delayed computation time, we have the expected number of steps by which the function is computed as $$C(K)= O\Bigg(\dfrac{h\tmix^G}{1 - c(\beta)}\bigg(\log^{3}{K}+\dfrac{n}{\nu}\frac{\log{K}}{\log n}\bigg)\Bigg).$$
	
So, we have one round function computation time i.e., assuming each source node has generated only one data packet as 
$\tcompk{1} = C(K) =  O\Bigg(\dfrac{h\tmix^G}{1 - c(\beta)}\bigg(\log^{3}{K}+\dfrac{n}{\nu}\frac{\log{K}}{\log n}\bigg)\Bigg)$, where $c(\beta)\in[0,1]$ is a continuous and increasing function of $\beta$ and $\thits$ is the worst-case hitting time of simple random walk on $G$. So,
\begin{equation}
\tcompk{\ell} \leq \ell\tcompk{1} \leq \ell\Bigg(D\dfrac{h\tmix^G}{1 - c(\beta)}\bigg(\log^{3}{K}+\dfrac{n}{\nu}\frac{\log{K}}{\log n}\bigg)\Bigg).
\label{eq:clearnce_time_k_2}
\end{equation}
where $D>1$ is a constant. Recall, $\tcomp$  is the expected function computation time of $\ell$ rounds of data arrival and so, we can write 
\begin{equation}
\tcomp = \tappk + \tcompk{\ell}.
\label{eq:tau_col_k_initial_2}
\end{equation}
Now, using the results from Eq.s \eqref{eq:tau_app_2} and \eqref{eq:clearnce_time_k_2}, we have
\begin{equation}
\tcomp \leq \ell \log eK \dfrac{b}{\beta} + \ell\left(D\dfrac{h\tmix^G}{1 - c(\beta)}\left(\log^{3}{K}+\dfrac{n}{\nu}\dfrac{\log{K}}{\log n}\right)\right)
\label{eq:tcol_k_actual_2}
\end{equation}
So, we have,
\begin{equation}
\tbarfk = \lim_{\ell\rightarrow \infty}\dfrac{\tcomp}{\ell} \leq \log eK \dfrac{b}{\beta} + \left(D\dfrac{h\tmix^G}{1 - c(\beta)}\left(\log^{3}{K}+\dfrac{n}{\nu}\dfrac{\log{K}}{\log n}\right)\right)
\label{eq:tcol_k_2}
\end{equation}
where $b,D>1$ are constants and $\thits$ is the worst-case hitting time of simple random walk on $G$.
\end{proof}

\section{Conclusion and Future Work}
\label{sec:conclusion}

In this paper, we have tried to demonstrate how random walk-based methods can be used for the in-network computation of a very general class of functions: asymmetric functions whose schema is described by a binary tree. We present lower and upper bounds on the rate for the fixed scenario. Our lower bound on rate though computed for our fixed setting is a general lower bound on rate of function computation. To the best of our knowledge, this is the first lower bound on rate for this class of problem. We also present the average function computation time under Bernoulli data generation model for both fixed and flexible model. However, our results hold for other data generation models as well, we will discuss our results in context of two different data generation models, some of the questions our setting and our results raise and also possible future directions of this work in the remaining part of this section.

\paragraph{Other data generation models} First we consider a realistic data generation model which is semi-deterministic in nature. In this model, given a $\beta > 0$ each node $u \in V_s$ generates packet with sequence number $i$ ($i > 0$) at time $\frac{i}{\beta} + N_v(0,\gamma),$ where $N_v(0,\gamma)$ is a normal random variable with mean 0 and variance $\gamma$ and $\gamma > 0$ is called the {\em clock drift parameter}. We assume that $\{N_v(0,\gamma): v \in V\}$ is an independent collection of random variables. So, for this model the maximum appearance time of $\ell$ rounds is $\tau_{app}^{\ell} \leq \frac{\ell}{\beta} + \max_{u\in V_s} N_v(0,\gamma)$. So, using this in Eq.~\eqref{eq:tau_col_k_initial} and Eq.~\eqref{eq:tau_col_k_initial_2} we get the average function computation time under the given data generation model for the \fixedalgo\ as $\dfrac{1}{\beta} + h \dfrac{\thits \log eK}{1 - c(\beta)}$ and for the \flexiblealgo\ as $\dfrac{1}{\beta} + \dfrac{h\tmix^G}{1 - c(\beta)}\bigg(\log^{3}{K}+\dfrac{n}{\nu}\dfrac{\log{K}}{\log n}\bigg)$ where variables denote the usual quantities as discussed in earlier sections.

Now, consider the other data generation model which is continuous wherein each source node has data arrivals as Poisson process. This data generation model results in a Markov process defined by the queues of the nodes. Since, every Markov process has an embedded Markov chain and we already know that the Markov chain on the queues (as seen under the Bernoulli data generation model) achieves stationarity, so the results for the Bernoulli model also hold true for this model.

\paragraph{Discussion: How general is the binary tree function schema?}
Note that any function of data consists of unary, binary or $M$-ary operations. In the function computation schema, any intermediate node representing unary operation can be merged with its parent node, i.e., the unary operation can be performed by the network vertex which performs the operation for its parent node. Thus function computation schema with only $M$-ary operations is general for $M\geq 2$. For an $M>2$ there are two possibilities. One is to create a binary tree for an $M$-ary operation by dividing the $M$-ary function into a series of binary operations. Then the time required to complete the $M$-ary function is equal to completing the operations of the equivalent binary tree which can be done using the techniques of Fixed and Flexible model analysis (see Section~\ref{subsec:computation_time}).
Another way is to look at the $M$-ary function as a symmetric function of $M$ data sources. In this case, time to compute the function can be computed using the techniques available in the literature for symmetric functions; see \cite{Mosk-PODC:2006}. 

\paragraph{Future work: Comparing the Fixed scenario and the Flexible scenario}
As discussed in Sections~\ref{sec:metrics_results}, currently we are not able to determine which of the two scenarios provides more efficient function computation. Proving that \flexiblealgo\ always does better than \fixedalgo\ is one direction that is worth pursuing since otherwise, we are not able to justify the extra information \flexiblealgo\ needs to store (each node must know the entire function schema).
However, on the grounds of fault-tolerance \flexiblealgo\ justifies the extra storage. It is more robust than
\fixedalgo\ since a single failure can disable the entire computation
in the latter case if the failure occurs at a node which is tasked
with computing a subfunction. In \flexiblealgo\ on the other hand, as
long as the sources are connected to each other the computation can
always take place since any node can perform any
subfunction. Characterising the computation time performance of
\flexiblealgo\ under suitable faults is an interesting direction to extend this
work since most real-world sensor nodes tend to be failure prone.

\paragraph{Future work: A rate result}
 In terms of analysis, the decomposition of \fixedalgo\ into a set of instances of data collection using random walks problem allows us to leverage the ideas developed in~\cite{Gillani-arXiv:2017} to characterise the rate of computation under an independent Bernoulli data generation model. However, characterising the rate of \flexiblealgo\ is not as straightforward and presents an interesting challenge that we look forward to addressing in the future. 
 
\bibliographystyle{plain}
 \bibliography{research}

\begin{thebibliography}{10}

\bibitem{Aldous-BOOK:2002}
D.~Aldous and J.~Fill.
\newblock Reversible markov chains and random walks on graphs, 2002.
\newblock (Monograph in preparation).

\bibitem{Alon-CPC:2011}
N.~Alon, C.~Avin, M.~Kouck{\`y}, G.~Kozma, Z.~Lotker, and M.~R. Tuttle.
\newblock Many random walks are faster than one.
\newblock {\em Comb. Probab. Comput.}, 20(4):481--502, 2011.

\bibitem{Avin-IPSN:2004}
C.~Avin and C.~Brito.
\newblock Efficient and robust query processing in dynamic environments using
  random walk techniques.
\newblock In {\em Proc. of the 3rd {I}ntl. {S}ymposium on {I}nformation
  {P}rocessing in {S}ensor {N}etworks}, IPSN '04, pages 277--286. ACM, 2004.

\bibitem{Banerjee-QS:2012}
S.~Banerjee, P.~Gupta, and S.~Shakkottai.
\newblock Towards a queueing-based framework for in-network function
  computation.
\newblock {\em Queueing Syst.}, 72(3):219--250, 2012.

\bibitem{Beraldi-TMC:2010}
R.~Beraldi, R.~Baldoni, and R.~Prakash.
\newblock A biased random walk routing protocol for wireless sensor networks:
  The lukewarm potato protocol.
\newblock {\em IEEE Trans. Mobile Comput.}, 9(10):1649 -- 1661, 2010.

\bibitem{Boyd-INFOCOM:2005}
S.~Boyd, A.~Ghosh, B.~Prabhakar, and D.~Shah.
\newblock Gossip algorithms: {D}esign, analysis and applications.
\newblock In {\em Proc. of the 24th {A}nnual {J}oint {C}onf. of the {IEEE}
  {C}omputer and {C}omm. {S}ocieties}, INFOCOM '05, pages 1653--1664 vol. 3.
  IEEE, 2005.

\bibitem{Hillel-SIAM:2012}
K.~Censor-Hillel and H.~Sachnai.
\newblock Fast information spreading in graphs with large weak conductance.
\newblock {\em SIAM J. Comput.}, 41(6):1451--1465, 2012.

\bibitem{Chau-INFOCOM:2009}
C.~K. Chau and P.~Basu.
\newblock Exact analysis of latency of stateless opportunistic forwarding.
\newblock In {\em Proc. of the 28th {A}nnual {J}oint {C}onf. of the {IEEE}
  {C}omputer and {C}omm. {S}ocieties}, INFOCOM '09, pages 828--836. IEEE, 2009.

\bibitem{Cooper-SIAM:2013}
C.~Cooper, R.~Els{\"a}sser, H.~Ono, and T.~Radzik.
\newblock Coalescing random walks and voting on connected graphs.
\newblock {\em SIAM J. Discrete Math.}, 27(4):1748--1758, 2013.

\bibitem{Cooper-ICALP:2009}
C.~Cooper, A.~Frieze, and T.~Radzik.
\newblock Multiple random walks and interacting particle systems.
\newblock ICALP '09, pages 399--410. Springer, 2009.

\bibitem{Dhanapala-JPDC:2011}
D.~C. Dhanapala, A.~P. Jayasumana, and Q.~Han.
\newblock On random routing in wireless sensor grids: A mathematical model for
  rendezvous probability and performance optimization.
\newblock {\em J Parallel Distrib. Comput.}, 71:369--380, 2011.

\bibitem{Dutta-FOCS:2008}
C.~Dutta and J.~Radhakrishnan.
\newblock Lower bounds for noisy wireless networks using sampling algorithms.
\newblock In {\em Proc. of 49th Annual IEEE Symposium on Foundations of
  Computer Science}, FOCS '08, pages 394--402, 2008.

\bibitem{Efremenko-APPROX-RANDOM:2009}
K.~Efremenko and O.~Reingold.
\newblock How well do random walks parallelize?
\newblock In {\em APPROX and RANDOM}, pages 476--489, 2009.

\bibitem{Gillani-arXiv:2017}
I.~A. Gillani, A.~Bagchi, and P.~Vyavahare.
\newblock Decentralized random walk-based data collection in networks.
\newblock arXiv:1701.05296 [cs.NI], 2017.

\bibitem{Giridhar-JSAC:2005}
A.~Giridhar and P.~R. Kumar.
\newblock Computing and communicating functions over sensor networks.
\newblock {\em IEEE J. Sel. Area. Comm.}, 23(4):755--764, 2005.

\bibitem{Gross-BOOK:2008}
D.~Gross.
\newblock {\em Fundamentals of queueing theory}.
\newblock John Wiley \& Sons, 2008.

\bibitem{Intanagonwiwat-TN:2003}
C.~Intanagonwiwat, R.~Govindan, D.~Estrin, J.~Heidemann, and F.~Silva.
\newblock Directed diffusion for wireless sensor networking.
\newblock {\em IEEE/ACM Trans. Netw.}, 11(1):2--16, 2003.

\bibitem{Iyer-TMC:2011}
S.~K. Iyer, D.~Manjunath, and R.~Sundaresan.
\newblock In-network computation in random wireless networks: {A PAC} approach
  to constant refresh rates with lower energy costs.
\newblock {\em IEEE Trans. Mob. Comput.}, 10(1):146--155, 2011.

\bibitem{Kamath-TIT:2014}
S.~Kamath, D.~Manjunath, and R.~Mazumdar.
\newblock On distributed function computation in structure-free random wireless
  networks.
\newblock {\em IEEE Trans. on Inform. Theory}, 60(1):432--442, 2014.

\bibitem{Kamra-SIGCOMM:2006}
A.~Kamra, V.~Misra, J.~Feldman, and D.~Rubenstein.
\newblock Growth codes: {M}aximizing sensor network data persistence.
\newblock In {\em Proc. of the 2006 {C}onf. on {A}pplications, {T}echnologies,
  {A}rchitectures, and {P}rotocols for {C}omputer {C}omm.}, SIGCOMM '06, pages
  255--266. ACM, 2006.

\bibitem{Kanade-arXiv:2017}
V.~Kanade, F.~Mallmann-Trenn, and T.~Sauerwald.
\newblock On coalescence time in graphs--{W}hen is coalescing as fast as
  meeting?
\newblock arXiv:1611.02460 [cs.DM], 2017.

\bibitem{Kannan-JSAC:2013}
S.~Kannan and P.~Viswanath.
\newblock Multi-session function computation and multicasting in undirected
  graphs.
\newblock {\em IEEE J. Sel. Area. Comm.}, 31(4):702--713, April 2013.

\bibitem{Karamchandani-TIT:2011}
N.~Karamchandani, R.~Appuswamy, and M.~Franceschetti.
\newblock Time and energy complexity of function computation over networks.
\newblock {\em IEEE Trans. Inf. Theory}, 57(12):7671--7684, 2011.

\bibitem{Khude-INFOCOM:2005}
N.~Khude, A.~Kumar, and A.~Karnik.
\newblock Time and energy complexity of distributed computation in wireless
  sensor networks.
\newblock In {\em Proc. of 24th Annual Joint Conf. of the IEEE Computer and
  Comm. Societies}, INFOCOM '05, pages 2625--2637 vol. 4. IEEE, 2005.

\bibitem{Levin-BOOK:2009}
D.~A. Levin, Y.~Peres, and E.~L. Wilmer.
\newblock {\em Markov chains and mixing times}.
\newblock American Mathematical Soc., 2009.

\bibitem{Liu-SIGMETRICS:2013}
J.~Liu, C.~H. Xia, N~.B. Shroff, and X.~Zhang.
\newblock On distributed computation rate optimization for deploying cloud
  computing programming frameworks.
\newblock {\em SIGMETRICS Perform. Eval. Rev.}, 40(4):63--72, April 2013.

\bibitem{Mabrouki-VALUETOOLS:2007}
I.~Mabrouki, X.~Lagrange, and G.~Froc.
\newblock Random walk based routing protocol for wireless sensor networks.
\newblock In {\em Proc. of the 2nd {I}ntl. {C}onf. on {P}erformance
  {E}valuation {M}ethodologies and {T}ools}, VALUETOOLS '07, page~71. ICST,
  2007.

\bibitem{Mosk-PODC:2006}
D.~Mosk-Aoyama and D.~Shah.
\newblock Computing separable functions via gossip.
\newblock In {\em Proc. of the 25th {A}nnual {ACM} {S}ymposium on {P}rinciples
  of {D}istributed {C}omputing}, PODC '06, pages 113--122. ACM, 2006.

\bibitem{Patel-SIAMX:2016}
R.~Patel, A.~Carron, and F.~Bullo.
\newblock The hitting time of multiple random walks.
\newblock {\em SIAM J. Matrix Anal. Appl.}, 37(3):933--954, 2016.

\bibitem{Sauerwald-FOCS:2012}
T.~Sauerwald and H.~Sun.
\newblock Tight bounds for randomized load balancing on arbitrary network
  topologies.
\newblock In {\em Proc. of IEEE 53rd Annual Symposium on Foundations of
  Computer Science}, FOCS '12, pages 341--350. IEEE, 2012.

\bibitem{Shah-ICASSP:2009}
D.~Shah.
\newblock Network gossip algorithms.
\newblock In {\em Proc. of the {IEEE} {I}ntl. {C}onf. on {A}coustics, {S}peech
  and {S}ignal {P}rocessing}, ICASSP '09, pages 3673--3676. IEEE, 2009.

\bibitem{Shah-JSAC:2013}
V.~Shah, B.~K. Dey, and D.~Manjunath.
\newblock Network flows for function computation.
\newblock {\em IEEE J. Sel. Area. Comm.}, 31(4):714--730, April 2013.

\bibitem{Szpankowski-TechReport:1989}
W.~Szpankowski.
\newblock Towards computable stability criteria for some multidimensional
  stochastic processes arising in queueing models.
\newblock Technical Report Computer Science 89-879, Purdue University, 1989.

\bibitem{Vyavahare-arXiv:2015}
P.~Vyavahare, N.~Limaye, and D.~Manjunath.
\newblock On the maximum rate of networked computation in a capacitated
  network.
\newblock arXiv:1507.04234[cs.DC], 2015.

\end{thebibliography}
 
\end{document}